\documentclass[11pt]{article}
\usepackage[margin=1in]{geometry}
\usepackage{amsmath,amssymb,amsthm,bm,mathtools}
\usepackage{natbib}
\usepackage{booktabs}
\usepackage{enumitem}
\usepackage{hyperref}
\usepackage{dsfont}
\hypersetup{colorlinks=true,citecolor=blue,linkcolor=blue,urlcolor=blue}
\usepackage{outlines}
\usepackage[justification=centering]{caption}
\usepackage{subcaption} 
\usepackage{bm}
\usepackage{bbm}
\usepackage{textcomp}
\usepackage{xcolor}
\usepackage[nocomma]{optidef}
\usepackage{tabularx}
\usepackage{comment}
\usepackage{tikz}

\newcommand{\bs}[1]{\boldsymbol{#1}}
\newcommand{\mc}[1]{\mathcal{#1}}

\newcommand{\Prb}{\mathbb{P}}
\newcommand{\E}{\mathbb{E}}
\newcommand{\FDR}{\mathrm{FDR}}
\newcommand{\FDP}{\mathrm{FDP}}
\newcommand{\BH}{\mathrm{BH}}

\newcommand{\cI}{\mathcal{I}}
\newcommand{\barPsi}{\overline{\Psi}}

\newcommand{\ignore}[1]{}

\theoremstyle{plain}
\newtheorem{theorem}{Theorem}[section]
\newtheorem{lemma}[theorem]{Lemma}

\newtheorem{result}[theorem]{Result}
\theoremstyle{definition}
\newtheorem{definition}[theorem]{Definition}
\theoremstyle{remark}
\newtheorem{remark}[theorem]{Remark}

\title{Further Results on Controlling the False Discovery Rate in Two-Sided Gaussian Mean Testing}
\author{
  Deepra Ghosh \qquad\qquad Sanat K. Sarkar \\[1ex]
  \normalsize Department of Statistics, Operations and Data Science \\
  \normalsize Temple University
}
\date{August 2026}

\begin{document}
\maketitle
\begin{abstract}
The recent work of \citet{GhoshSarkar2025} introduced Positive Tail Dependence Under the Null (PTDN) and developed Generalized Shifted Benjamini--Hochberg (BH) procedures for two-sided Gaussian $z$- and $t$-testing under known covariance structures. This paper develops further consequences of that framework. First, we derive explicit dependence-adaptive lower and upper bounds for the FDR of the original BH procedure in terms of the conditional variance parameters $\tau_i=1-R_i^2$, where $R_i^2$ is the squared multiple correlation between the $i$th statistic and the remaining coordinates. These bounds recover the exact BH FDR under independence and provide finite-sample, covariance-specific information complementary to generic bounds. We also identify conditions under which the coordinate-specific calibration of shifted BH can provide a rejection advantage over the original BH procedure. Second, we consider the practically important setting in which the covariance matrix is unknown but an independent Wishart estimator is available. Using simultaneous lower confidence bounds for the $\tau_i$'s, we construct a confidence-bound shifted BH procedure and establish finite-sample FDR control. To our knowledge, this is the first shifted-BH-type procedure with a finite-sample guarantee for two-sided Gaussian mean testing under a completely unknown covariance matrix estimated independently. Numerical studies illustrate the behavior of the covariance-adaptive bounds, the potential advantage of shifted BH over BH, and the performance of confidence-bound shifting under unknown covariance.
\end{abstract}

\noindent\textbf{Keywords:} Benjamini--Hochberg procedure; conditional variance; Gaussian dependence; positive tail dependence; shifted $p$-values.

\section{Introduction}
\label{sec:introduction}
The multiplicity correction introduced by \citet{BenjaminiHochberg1995}, that is targeted to control the false discovery rate (FDR), is considered the foundational work that nurtured decades of insightful research in this domain. The method, commonly known as the Benjamini-Hochberg (BH) procedure,  is simple, transparent and easily implemented across circumstances in the real world. Despite being widely implemented as an acceptable form of correction, the theoretical guarantees of its control are scarce. Going beyond the control under independence in the same seminal work, the sole structure among the associated test statistics that guarantees finite-sample control involves a positive regression dependency, \citet{BenjaminiYekutieli2001, Sarkar2002}. Asymptotic control was also established under weak local dependence structures in \citet{StoreyEtAl2004}. Improvements have been made over the BH procedure by introducing adaptive measures, \citet{JinCai2007, LiangNettleton2012}, weighted measures, \citet{Genovese2006, Ignatiadis2016} and multi-stage alternatives \citet{BenjaminiKriegerYekutieli2006, BlanchardRoquain2009}. Several numerical studies have showed reliable FDR control of the BH method, \citet{Farcomeni2006, ReinerBenaim2007}, that has elicited its widespread use across scientific domains notwithstanding its lack of theoretical validity.

The recent work of \citet{GhoshSarkar2025} introduced \emph{Positive Left-Tail Dependence Under the Null} (PLTDN) as a dependence framework for developing finite-sample FDR-controlling BH-type step-up procedures for testing Gaussian means against two-sided alternatives under dependence. Building on the leave-one-out representation of the BH FDR, that work established a general FDR theorem for PLTDN-satisfying evidence measures, referred to as shifted $p$-values, and developed a broad class of Generalized Shifted BH (GSBH) procedures encompassing and extending the shifted-BH methodology of \citet{SarkarZhang2025} for known covariance structures. The resulting framework substantially broadened the scope of finite-sample FDR control for two-sided Gaussian $z$- and $t$-testing, including regression-based variable selection.

The present paper continues this line of research. We begin by revisiting the development of PLTDN, adopting the simpler nomenclature \emph{Positive Tail Dependence Under the Null} (PTDN), and provide further insight into this notion of positive dependence, particularly its relationship with positive regression dependence on a subset (PRDS). We then develop three further consequences of the PTDN framework that were not explored in \citet{GhoshSarkar2025}.

First, we investigate the behavior of the original BH procedure itself. For two-sided $z$-tests, we derive explicit finite-sample lower and upper bounds on its FDR in terms of the conditional variance parameters $\tau_i=1-R_i^2$, where $R_i^2$ denotes the squared multiple correlation between $X_i$ and the remaining variables. We also derive a corresponding upper bound for two-sided $t$-tests. Building on the GSBH methodology developed under PTDN in \citet{GhoshSarkar2025}, these results provide a dependence-adaptive characterization of BH under Gaussian dependence. The bounds recover the exact BH FDR under independence, specialize naturally to structured covariance models such as equi-correlated Gaussian distributions, and complement existing generic bounds \citep{BenjaminiYekutieli2001,Su2018}. Recent work \citep{Dobriban2026,Lei2026} has shown that BH admits neither universal finite-sample FDR control nor a universal multiplicative FDR bound under arbitrary Gaussian dependence. Our results address a complementary question by quantifying the finite-sample behavior of BH for a given covariance structure rather than its worst-case behavior over unrestricted dependence.

Second, we investigate the power of shifted BH procedures. Although the shifted BH procedures of \citet{SarkarZhang2025} provide exact finite-sample FDR control, their potential power advantages over the original BH procedure have not been systematically explored. Focusing on shifted BH method 1 (SBH-1) of \citet{SarkarZhang2025}, we identify sufficient conditions under which its coordinate-specific calibration yields a strict power advantage over BH and provide a simple illustrative example demonstrating this phenomenon. The example highlights how the conditional variance parameters can enlarge the effective rejection thresholds for highly dependent coordinates, allowing SBH-1 to detect signals missed by BH while retaining finite-sample FDR control.

Our third contribution concerns unknown covariance structures. The methodology of \citet{GhoshSarkar2025} assumes that the covariance matrix is known. In many applications, however, dependence information is available only through an independent estimate obtained from historical studies, pilot experiments, replicate measurements, or other independent sources of covariance information. To address this practically important setting, we consider $\bs{X} \sim N_d(\bs{\mu},\bs{\Sigma})$ and $\bs{A} \sim W_d (n,\bs{\Sigma})$ where $\bs{X} \perp \bs{A}$. 
We develop, to the best of our knowledge, the first shifted-BH-type procedure possessing exact finite-sample FDR control in this setting. The construction uses simultaneous lower confidence bounds for the conditional variance parameters underlying the PTDN argument, leading to a simple confidence-bound shifted BH procedure that approaches its oracle counterpart as the Wishart degrees of freedom increase.

These developments share a common theme. While \citet{GhoshSarkar2025} established PTDN as a dependence principle underlying the construction of finite-sample FDR-controlling procedures, the present paper shows that the same framework also provides a dependence-adaptive understanding of the original BH procedure, clarifies when and why shifted BH methods can gain power, and extends the methodology to unknown covariance structures. Together, the two papers provide a broader framework for finite-sample FDR control in two-sided Gaussian mean testing under dependence.

The remainder of the paper is organized as follows. Section~\ref{sec:review} revisits the PTDN framework and recalls, without proof, results from \citet{SarkarZhang2025} and \citet{GhoshSarkar2025} needed in the sequel, including SBH-1 and the conditions underlying its potential power advantage over BH. Section~\ref{sec:bounds} develops dependence-adaptive lower and upper bounds for the FDR of the original BH procedure and discusses their relationship with existing generic and asymptotic bounds. Section~\ref{sec:unknown} develops confidence-bound shifted BH procedures for unknown covariance matrices estimated from an independent Wishart sample. Section~\ref{sec:numerical} presents numerical studies illustrating the power advantage of SBH-1 over BH for two-sided $z$-tests, comparing the covariance-adaptive upper bound with the generic Benjamini--Yekutieli bound under weak to moderate dependence, and assessing the performance of the proposed confidence-bound shifted BH procedure when the covariance matrix is unknown. Section~\ref{sec:discussion} concludes with a discussion and directions for future research.

\section{Inherited Tail Dependence Machinery}\label{sec:review}
This section briefly reviews the work of \citet{GhoshSarkar2025}, emphasizing those aspects needed for the subsequent developments. Specifically, we revisit the notion of PLTDN introduced there, henceforth referred to more simply as PTDN, provide some additional observations on this dependence condition, and recall several basic PTDN-related results for testing Gaussian means against two-sided alternatives using two-sided $z$- and $t$-tests.

\subsection{Introducing PTDN} 
\label{subsec:PTDN intro} 
Let $\bs P=(P_1,\ldots,P_d)$, where each $P_i$ is an evidence measure with smaller values indicating stronger evidence against the null hypotheses $H_i$ for  $i=1,\ldots,d$, and let $\cI_0$ denote the set of true-null indices. Consider a step-up testing procedure with critical constants $\alpha_1< \cdots < \alpha_d$, defined as follows: Order the $P_i$'s as
$P_{(1)}\leq\cdots\leq P_{(d)}$ and define $R(\bs P)=\max\{r:P_{(r)}\leq\alpha_r\}$,
if the maximum exists, and $R(\bs P)= 0$ otherwise. Reject all $H_i$ such that $P_i\leq P_{(R)}$, with no rejections when $R=0$.

Let $R(\bs P_{-i})$ be the leave-one-out rejection count obtained from $\bs{P}_{-i}$ with the critical constants $\alpha_2,\ldots,\alpha_d$. Then, a standard leave-one-out identity for the FDR of this step-up procedure is
\begin{eqnarray}\label{eqn:FDR}
\FDR & = & \sum_{i\in \cI_0}\E\left[
\frac{\mathbb{I} \{P_i\le \alpha_{R(\bs P_{-i})+1}\}}{R(\bs P_{-i})+1}\right ] \nonumber \\ & = & \sum_{i\in \cI_0} \Prb(P_i \le \alpha_1) + \sum_{i\in\cI_0}\sum_{r=1}^{d-1} \left[
\frac{\Prb \{P_i\le \alpha_{r+1}\mid R(\bs{P}_{-i}) \ge r\}}{r+1} - \right. \nonumber \\ & & \qquad \left. \frac{\Prb\{P_i\le \alpha_{r}\mid R(\bs{P}_{-i}) \ge r \}}{r}
\right] \Prb(R(\bs{P}_{-i}) \ge r) \end{eqnarray}
This representation, originated, for instance, in \citet{Sarkar2002,Sarkar2008}, highlights that the dependence of $P_i$ on $\bs P_{-i}$, rather than the reverse dependence emphasized in PRDS, lies at the heart of finite-sample FDR control, and has motivated \citet{GhoshSarkar2025} to introduce the following notion of \emph{Positive Tail Dependence Under the Null} (PTDN), referred to there as \emph{Positive Left-Tail Dependence Under the Null} (PLTDN).

\begin{definition}\label{Definition1}
For a fixed $c \in (0,1)$, the $P_i$'s are said to satisfy \emph{Positive Tail Dependence Under the Null} (PTDN) over $(0,c)$ if, for every $i\in \cI_0$, every increasing function $g(\cdot)$ of $\bs P_{-i}$, and every fixed $t>0$,
\[
\frac{\Prb(P_i\le u\mid g(\bs P_{-i})\le t)}{u}
\]
is decreasing in $u\in(0,c)$.
When $c=1$, the $P_i$'s are simply said to satisfy PTDN.
\end{definition}

\begin{remark}\label{Remark1} (i) The condition $\Prb\left(P_i \le u \mid \bs{P}_{-i}\right)/u \;
\downarrow \; u \in (0,c),$ being  stronger than PTDN, may itself  be viewed as a stronger form of PTDN. In particular, it holds whenever the conditional distribution function of $P_i$ given $\bs{P}_{-i}$ is concave in $u$, for each $i\in \cI_0$.

(ii) PTDN is invariant under co-monotone transformations. In particular, suppose
$P_i=\phi_i(T_i)$, where $\phi_i(\cdot)$ is monotone. If $\phi_i(\cdot)$ is decreasing, then the PTDN condition for the $P_i$'s is equivalent to $\Prb\left(T_i \ge x \mid g(\bs{T}_{-i}) \ge s \right)/ \Pr(T_i \ge x)$ being increasing in $x$, for every $i\in \cI_0$, every increasing function $g(\cdot)$, and every fixed constant $s>0$. Thus, PTDN translates into a form of positive dependence between $T_i$ and $\bs{T}_{-i}$ through their right tails. If $\phi_i(\cdot)$ is increasing, PTDN translates into an analogous positive dependence condition through the left tails of the $T_i$'s. Keeping in hindsight the natural left-tail dependence of the $p$-values but logical transformed right-tail dependence of the test statistics, we register the nomenclature of PTDN as the standard name for the dependency wherein the PLTDN in \citet{GhoshSarkar2025} referred only to the dependence among the $p$-values. The condition represents the same positive dependency of $P_i$ on the  conditioning event $\{h(\bs{P}_{-i})\geq t\}$ where $h(\cdot)$ is a decreasing function of $\bs{P}_{-i}$. 

(iii) The interpretation of PTDN as a form of positive dependence between $P_i$ and $\bs P_{-i}$, for each $i\in \cI_0$, follows from the fact that conditioning on stronger evidence against the remaining null hypotheses increases the tendency of $P_i$ to take smaller values relative to what would be expected under independence, with this relative increase becoming more pronounced deeper in the lower tail.
This provides a new perspective on positive dependence in multiple testing. Since the work of \citet{BenjaminiYekutieli2001}, \emph{Positive Regression Dependence on a Subset} (PRDS) has served as the principal framework for establishing FDR control under dependence. It is a global regression monotonicity condition imposed on the joint distribution of the test statistics, whereas PTDN is formulated directly in terms of the conditional lower-tail probabilities arising in the leave-one-out representation of the BH FDR. To clarify the relationship between the two notions, suppose the null $P_i$'s are marginally uniform under independence and let
\[
A_t=\{g(\bs P_{-i})\le t\},
\]
where $g(\cdot)$ is increasing. Then, for each null $P_i$,
\[
\frac{\Prb(P_i\le u\mid A_t)}{u}
=
\frac{\Prb(A_t\mid P_i\le u)}{\Prb(A_t)}
=
\frac{\mathbb{E}\{\mathbb{I}(A_t)\mid P_i\le u\}}{\Prb(A_t)}.
\]
Since $\mathbb{I}(A_t)$ is decreasing in $\bs P_{-i}$, the PRDS condition implies that the above ratio is decreasing in $u$, establishing that
\[
\mathrm{PRDS}\Longrightarrow\mathrm{PTDN}.
\]
Thus, PTDN is a weaker dependence condition, requiring the regression monotonicity property only for the lower-tail events arising in the leave-one-out FDR identity rather than for all decreasing functions. In this sense, PTDN may be viewed as the dependence condition naturally associated with BH-type procedures.

(iv) Two related notions of PTDN are used. When we say that $(P_1,\ldots,P_d)$ satisfies PTDN, we mean that, for every null hypothesis $H_i$, the corresponding evidence measure $P_i$ is PTDN on $\bs P_{-i}$ in the sense of Definition~\ref{Definition1}. Thus, PTDN for a collection is simply the coordinatewise PTDN property holding simultaneously for all null coordinates. However, we often work with transformed evidence measures $\widehat P_i$, constructed from the original $P_i$'s. In such cases, the statement that ``$\widehat P_i$ is PTDN on $\bs P_{-i}$'' refers only to the coordinatewise property for the $i$th transformed evidence measure and does not imply that the entire collection $(\widehat P_1,\ldots,\widehat P_d)$ satisfies PTDN. This distinction is important because the following general FDR theorem requires only the coordinatewise PTDN property for each null evidence measure, allowing some coordinates to be replaced by suitable PTDN-preserving transformations whenever the original evidence measures themselves fail to satisfy PTDN.

\end{remark}
The following theorem forms the theoretical foundation for some of the developments in this paper.

\begin{theorem}\label{Theorem1}(\cite{GhoshSarkar2025})
Consider a step-up test based on ${P}_i$'s and critical constants ${\alpha_i} = i \widetilde{\alpha}/d$, $i=1,\ldots, d$, for some fixed $\widetilde{\alpha} \in (0,1)$. Let, for each $i \in \cI_0$, either ${P}_i$ is PTDN on ${\bs{P}}_{-i}$ over $(0,c)$, or if not, there is a stochastically smaller and increasing function of ${P}_i$, conditionally given ${\bs{P}}_{-i}$, say $\hat{P}_i$, which is PTDN on ${\bs{P}}_{-i}$ over $(0,c)$, for some $c \in (\widetilde{\alpha}, 1]$.  Then, the FDR of this step-up test is bounded above by $\sum_{i \in \cI_0 \bigcap \mc{S}^{c}} \textrm{Pr} ({P}_i \le \widetilde{\alpha}/d) + \sum_{i \in \cI_0 \bigcap \mc{S}} \textrm{Pr} (\hat{P}_i \le \widetilde{\alpha}/d)$, where $\mc{S} = \{i: \hat{P}_i \overset{\text{st}}{\preceq} {P}_i\}$.
\end{theorem}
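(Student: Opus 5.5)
The plan is to start from the leave-one-out representation \eqref{eqn:FDR} and treat each null coordinate separately. With $\alpha_r = r\widetilde\alpha/d$, the $i$th summand is
\[
\E\left[\frac{\mathbb{I}\{P_i\le \alpha_{R(\bs P_{-i})+1}\}}{R(\bs P_{-i})+1}\right].
\]
Here $R(\bs P_{-i})$ is an increasing function of $\bs P_{-i}$ in the reverse sense: the event $\{R(\bs P_{-i})\ge r\}$ is of the form $\{g_r(\bs P_{-i})\le t\}$ for an increasing $g_r$. For example, take
\[
g_r(\bs P_{-i})=\min_{k\ge r}\bigl(P_{(k)}^{-i}-\alpha_{k+1}\bigr)
\]
and $t=0$, or shift to make $t>0$. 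This is exactly the type of conditioning event covered by Definition~\ref{Definition1}.

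For $i\in\cI_0\cap\mc S^c$, where $P_i$ itself is PTDN on $\bs P_{-i}$ over $(0,c)$, I would use the second line of \eqref{eqn:FDR}. For each $r$ with $\alpha_{r+1}\le \widetilde\alpha<c$, PTDN gives
\[
\frac{\Prb(P_i\le\alpha_{r+1}\mid R\ge r)}{\alpha_{r+1}}\le \frac{\Prb(P_i\le \alpha_r\mid R\ge r)}{\alpha_r}.
\]
Since $\alpha_{r+1}/(r+1)=\alpha_r/r=\widetilde\alpha/d$, each bracketed difference is $\le 0$. The $i$th contribution is therefore at most $\Prb(P_i\le\alpha_1)=\Prb(P_i\le\widetilde\alpha/d)$.

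For $i\in\cI_0\cap\mc S$, I would first replace $P_i$ by $\hat P_i$ before the telescoping step. Since $\hat P_i$ is conditionally stochastically smaller than $P_i$ given $\bs P_{-i}$, and $R(\bs P_{-i})$ is a function of $\bs P_{-i}$ alone, conditioning on $\bs P_{-i}$ gives
\[
\Prb\bigl(P_i\le \alpha_{R+1}\mid \bs P_{-i}\bigr)\le \Prb\bigl(\hat P_i\le \alpha_{R+1}\mid \bs P_{-i}\bigr).
\]
The $i$th summand is then bounded by the same expression with $\hat P_i$ in place of $P_i$. Decomposing that expression exactly as in \eqref{eqn:FDR}, with $\hat P_i$ and $R(\bs P_{-i})$, and applying PTDN of $\hat P_i$ on $\bs P_{-i}$ yields the bound $\Prb(\hat P_i\le \widetilde\alpha/d)$. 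Summing over the two groups of null indices gives the stated bound.

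The main obstacle I expect is making sure the PTDN hypothesis legitimately applies to the events $\{R(\bs P_{-i})\ge r\}$. Two points need care. First, the monotonicity direction and the strictness $t>0$ in Definition~\ref{Definition1} must be handled, which I would do by a harmless reparametrization of $g$. Second, the range $(0,c)$ must contain all the $\alpha_r$ used. Because $c>\widetilde\alpha=\alpha_d$, every comparison involves $u\le \widetilde\alpha<c$, so the restricted PTDN suffices. A minor point is that the theorem's phrase ``stochastically smaller and increasing function'' only needs the conditional stochastic ordering; monotonicity of $\hat P_i$ in $P_i$ is not essential to the argument above.
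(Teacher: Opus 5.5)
Your proposal is correct and matches the argument behind this result. The paper states it without proof, citing \citet{GhoshSarkar2025}, but sets it up through the leave-one-out identity \eqref{eqn:FDR}: PTDN applied to the events $\{R(\bs P_{-i})\ge r\}$, combined with $\alpha_{r+1}/(r+1)=\alpha_r/r=\widetilde\alpha/d$, makes every bracketed term nonpositive; on $\mc S$ one first passes from $P_i$ to $\hat P_i$ using the conditional stochastic ordering, which is needed only at the points $\alpha_{r+1}\le\widetilde\alpha$ (all the ordering that Lemma~\ref{Lemma1}(ii) actually provides).
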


\subsection{Basic PTDN-related results}
\label{subsec:PTDN results}
Let $\bs X\sim N_d(\bs\mu,\bs\Sigma)$, where $\bs\Sigma$ is a known correlation matrix. For two-sided $z$-tests, the original $p$-values for testing
\[
H_i:\mu_i=0
\quad\text{against}\quad
K_i:\mu_i\neq0,
\qquad i=1,\ldots,d,
\]
are $P_i=\bar{\Psi}_1(X_i^2)$, where $\bar{\Psi}_1(\cdot)$ denotes the survival function of a $\chi_1^2$ distribution. For two-sided $t$-tests, suppose
\[
\bs X\sim N_d(\bs\mu,\sigma^2\bs\Sigma),
\qquad
V\sim\sigma^2\chi_\nu^2,
\qquad
V\perp\bs X.
\]
Then the original $p$-values for the same testing problem are given by $P_i=\bar{\Psi}_{1,\nu}(X_i^2/V)$, where $\bar{\Psi}_{1,\nu}(\cdot)$ denotes the survival function of the distribution of $F_{1,\nu}/\nu$, equivalently the beta type-II distribution with shape parameters $(1,\nu)$.

For $0<\tau\le 1$, define
\begin{equation*}\label{eq:Gdef}
G_\tau(u)=\barPsi_1\!\left\{\tau\barPsi_1^{-1}(u)\right\}\; \mbox{and} \;  H_{\tau}(u)=\barPsi_{1,\nu}\!\left\{\tau\barPsi_{1,\nu}^{-1}(u)\right\}, \;
\qquad 0<u<1.
\end{equation*}
For notational convenience and to provide a unified treatment of the two testing settings, we use $F_{\tau}(\cdot)$ generically to denote $G_{\tau}(\cdot)$ for two-sided $z$-tests and $H_{\tau}(\cdot)$ for two-sided $t$-tests. Also, let $\tau_i := 1/((\bs{\Sigma}^{-1}))_{ii} = 1-R_i^2$, where $R_i^2$ is the squared multiple correlation between $X_i$ and $\bs{X}_{-i}$.

The following lemma, when combined with Theorem~\ref{Theorem1}, yields the finite-sample upper bounds for the FDR of step-up procedures based on $\widetilde P_{i,\tau}=F_{\tau}^{-1}(P_i)$, where the original $p$-values $P_i$ are defined as above. These transformed quantities are referred to as \emph{shifted $p$-values} \citep{SarkarZhang2025}.

\begin{lemma}\label{Lemma1} (\cite{GhoshSarkar2025})
For each $i \in \cI_0$, the following hold: \begin{enumerate}
    \item [(i)] If $\tau \leq \tau_i$, then $\widetilde{P}_{i,\tau} = F_{\tau}^{-1}(P_i)$ is PTDN on $\bs{P}_{-i}$.
    \item [(ii)] If $\tau > \tau_i$, then $\hat{P}_{i,\tau} = \widetilde{\alpha} \widetilde{P}_{i,\tau_i}/F_{\tau_i/\tau}^{-1}(\widetilde{\alpha})$ is increasing in $\widetilde{P}_{i,\tau}$, stochastically smaller than $\widetilde{P}_{i,\tau}$, and is PTDN on $\bs{P}_{-i}$ over $(0, \widetilde{\alpha}]$.
    \end{enumerate}
    \end{lemma}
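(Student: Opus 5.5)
Under $H_i$ we reduce everything to a one-dimensional conditional statement via the Gaussian regression of $X_i$ on $\bs X_{-i}$. Since $\bs\Sigma$ is a correlation matrix, under $\mu_i=0$ we can write
\[
X_i=\bs\beta_i^\top\bs X_{-i}+\sqrt{\tau_i}\,Z_i,\qquad Z_i\sim N(0,1),\quad Z_i\perp\bs X_{-i}.
\]
The conditional mean $m_i=\bs\beta_i^\top(\bs X_{-i}-\bs\mu_{-i})+\bs\beta_i^\top\bs\mu_{-i}$ is a linear combination of the other coordinates; only the null mean of $X_i$ is needed. Given $\bs X_{-i}$, $X_i^2/\tau_i$ is noncentral $\chi^2_1$ with noncentrality $\delta=m_i^2/\tau_i$. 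For the $t$-case we also condition on $V$, so that $X_i^2/(\tau_iV)$ is a scaled noncentral $F$.

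Because $P_i=\barPsi_1(X_i^2)$ is decreasing in $X_i^2$, we get
\[
\widetilde P_{i,\tau}\le u \iff X_i^2\ge \barPsi_1^{-1}(F_\tau(u))=\tau\barPsi_1^{-1}(u).
\]
Hence, with $x=\barPsi_1^{-1}(u)$,
\[
\Prb(\widetilde P_{i,\tau}\le u\mid \bs X_{-i})=\barPsi_{1,\delta}\bigl((\tau/\tau_i)\,x\bigr),
\]
where $\barPsi_{1,\delta}$ is the noncentral survival function.

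\textbf{Part (i).} I would show that, for $\tau\le\tau_i$ and every $\delta\ge0$, the map $u\mapsto \barPsi_{1,\delta}((\tau/\tau_i)\barPsi_1^{-1}(u))/u$ is decreasing on $(0,1)$. This is the stronger form of PTDN in Remark~\ref{Remark1}(i): it requires concavity of $u\mapsto$ the conditional cdf, or star-shapedness through the origin. Averaging over any conditioning event $\{g(\bs P_{-i})\le t\}$ preserves monotonicity of the ratio, since it is a mixture over $\delta$ of functions each decreasing in $u$.

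To establish the monotonicity, I would write the noncentral $\chi^2_1$ survival as the two-sided Gaussian tail
\[
\Phi(-\sqrt{s}-\sqrt\delta)+\Phi(-\sqrt s+\sqrt\delta),\qquad s=(\tau/\tau_i)x .
\]
I then differentiate in $x$ and compare with $du/dx=-\phi(\sqrt x)/\sqrt x$. The claim reduces to the statement that the likelihood ratio $\{\phi(\sqrt{cx}-\sqrt\delta)+\phi(\sqrt{cx}+\sqrt\delta)\}\sqrt c/\phi(\sqrt x)$, with $c=\tau/\tau_i\le1$, is increasing in $x$. This follows because $\cosh(\sqrt{\delta c x})\,e^{-(c-1)x/2}$ is increasing when $c\le1$. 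An increasing density ratio implies that the survival ratio is decreasing in $u$, by the standard monotone-likelihood-ratio argument: for $\bar F/\bar G$ with $\bar G=u$, an increasing ratio of derivatives in $x$ gives a decreasing ratio in $u$ via the Cauchy mean value/L'Hôpital monotone rule. The $t$-case is handled identically conditional on $V$, using the $F$-type analogue: the scale mixture over $V$ with noncentrality in the numerator. Alternatively, one can note that $H_\tau$ corresponds to the same radial comparison and integrate over the $\chi^2_\nu$ law of $V$ after conditioning. This integration is the step I expect to be most delicate, since the mixture over $V$ must be carried out before forming the ratio in $u$.

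\textbf{Part (ii).} Write $\hat P_{i,\tau}=k\,\widetilde P_{i,\tau_i}$ with $k=\widetilde\alpha/F^{-1}_{\tau_i/\tau}(\widetilde\alpha)$. By part (i) with $\tau=\tau_i$, $\widetilde P_{i,\tau_i}$ is PTDN, and scaling by a constant preserves the PTDN ratio monotonicity on the correspondingly rescaled range, which yields PTDN on $(0,\widetilde\alpha]$. Monotonicity in $\widetilde P_{i,\tau}$ follows from
\[
\widetilde P_{i,\tau_i}=F_{\tau_i}^{-1}(F_\tau(\widetilde P_{i,\tau}))=F^{-1}_{\tau_i/\tau}(\widetilde P_{i,\tau}),
\]
using the semigroup identity $F_a\circ F_b=F_{ab}$. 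For the stochastic ordering, I need $kF^{-1}_{\rho}(v)\le v$ for $v\le\widetilde\alpha$, where $\rho=\tau_i/\tau<1$. Equivalently, $F^{-1}_\rho(v)/v$ is increasing in $v$, which is the star-shaped (concavity-type) property of $F_\rho$ obtained as in part (i) with $\delta=0$. I would then check the remaining range $v>\widetilde\alpha$ against the conditional law given $\bs P_{-i}$, using the fact that Theorem~\ref{Theorem1} only needs the relevant tail.
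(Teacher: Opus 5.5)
\noindent Your $z$-test argument is correct and follows the paper's route. By \eqref{eqn:Non-CentChi}, the conditional cdf of $\widetilde P_{i,\tau}$ given $\bs X_{-i}$ is $G_{\tau/\tau_i,\lambda_i}(u)$. The paper cites its concavity (Result~\ref{Result1}), and that property survives mixing over the conditioning event. You actually prove more than the star-shapedness you claim: $\frac{d}{du}G_{c,\delta}(u)$ is a constant multiple of your density ratio evaluated at $x=\barPsi_1^{-1}(u)$. Its monotonicity in $x$ therefore gives concavity directly, which is a self-contained proof of the $z$-part of Result~\ref{Result1}. Your part~(ii) reduction is also fine: $F_{\tau_i}^{-1}\circ F_\tau=F^{-1}_{\tau_i/\tau}$, rescaling by $k\ge 1$, and star-shapedness of $F^{-1}_\rho$. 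Two small points. First, the null conditional mean is $\bs\beta_i^\top(\bs X_{-i}-\bs\mu_{-i})$, not $\bs\beta_i^\top\bs X_{-i}$; this is harmless, since only $\delta\ge0$ matters. Second, there is nothing to ``check'' on $v>\widetilde\alpha$. There $\hat P_{i,\tau}\ge \widetilde P_{i,\tau}$ pointwise, because $F^{-1}_\rho(v)/v$ is increasing, so the stochastic ordering in (ii) should be read on $(0,\widetilde\alpha]$ only. That is all Theorem~\ref{Theorem1} uses, since every critical constant is at most $\widetilde\alpha$.

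\medskip
\noindent The genuine gap is the $t$-test case, which you propose to handle ``identically conditional on $V$''. That step fails. Given $(\bs X_{-i},V)$, the cdf of $\widetilde P_{i,\tau}$ is $u\mapsto \barPsi_{1;\lambda}\{(\tau/\tau_i)V\barPsi_{1,\nu}^{-1}(u)\}$, and this is not star-shaped. As $u\to0$, $\barPsi_{1,\nu}^{-1}(u)$ grows only polynomially (like $u^{-2/\nu}$), while the $\chi^2_1$ tail decays exponentially. Its ratio to $u$ therefore tends to $0$ and cannot be decreasing. Nor can you first integrate $V$ out against its $\chi^2_\nu$ law, because the conditioning event $\{g(\bs P_{-i})\le t\}$ itself depends on $V$ through $P_j=\barPsi_{1,\nu}(X_j^2/V)$.

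The missing ingredient is Result~\ref{Result2}. For fixed $\bs X_{-i}$, $\mathbb{I}\{g(\bs P_{-i})\le t\}$ is a decreasing function of $V$, since each $P_j$ increases in $V$ and $g$ is increasing. Result~\ref{Result2} with $\theta=\tau/\tau_i\le 1$ states exactly that $u^{-1}\E[\barPsi_{1;\lambda}\{\theta V\barPsi_{1,\nu}^{-1}(u)\}\,g(V)]$ is decreasing for every such decreasing weight. Averaging over $\bs X_{-i}$ then yields PTDN in its original form, not the stronger form of Remark~\ref{Remark1}(i), as Remark~\ref{Remark2} notes. This weighted-mixture monotonicity is a separate, total-positivity-type fact, not a by-product of your $z$-computation.

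Your part~(ii) for $t$-tests inherits the gap, because it invokes part~(i) at $\tau=\tau_i$. The $\lambda=0$ concavity of $H_\rho$ needed there is easy, however. On the $\chi^2_1/\chi^2_\nu$ scale the density ratio is $\sqrt\rho\,\{(1+y)/(1+\rho y)\}^{(1+\nu)/2}$, which is increasing in $y$ for $\rho<1$.
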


To facilitate interpretation of the aforementioned upper bounds, we outline a proof of Lemma \ref{Lemma1} below, drawing on the necessary supporting results from \cite{GhoshSarkar2025}, which we state without proof.

First, recall that, for each $i\in\cI_0$,
\begin{equation}\label{eqn:Non-CentChi}
X_i^2\mid\bs X_{-i}
\ \overset{d}{=}\
\tau_i\chi_1^2\{\lambda_i(\bs X_{-i})\}, \; \mbox{where} \; \lambda_i(\bs X_{-i})= \frac{1}{\tau_i}\left\{\bs\Sigma_{-i,i}^{\prime}
\bs\Sigma_{-i,-i}^{-1}(\bs X_{-i}-\bs\mu_{-i})\right\}^{2} \end{equation}
is the conditional non-centrality parameter. Consequently, the non-central chi-square distribution with $1$ degree of freedom arises naturally in the FDR calculation involving two-sided $z$-tests, and for the corresponding two-sided $t$-tests, the relevant distribution is the corresponding non-central beta type-II distribution $B_{1,\nu}^{\prime}(\lambda)$, where $\nu$ denotes the degrees of freedom of the central chi-square variable in the denominator of the corresponding random variable imagined as a ratio of two independent chi-square variables..

Let $\bar{\Psi}_{1;\lambda}(\cdot)$ and $\bar{\Psi}_{1, \nu;\lambda}(\cdot)$ be the noncentral counterparts of $\bar{\Psi}_{1}(\cdot)$ and $\bar{\Psi}_{1, \nu}(\cdot)$, respectively, and, for any fixed $\tau \in (0,1)$,
\begin{equation*}
G_{\tau, \lambda}(u)=\bar{\Psi}_{1;\lambda}\!\left\{\tau\bar{\Psi}_1^{-1}(u)\right\}\; \mbox{and} \;  H_{\tau, \lambda}(u)=\bar{\Psi}_{1,\nu;\lambda}\!\left\{\tau\bar{\Psi}_{1,\nu}^{-1}(u)\right\}, \;
\qquad 0<u<1.
\end{equation*}
We let $F_{\tau, \lambda}(\cdot)$ generically to denote $G_{\tau,\lambda}(\cdot)$ for two-sided $z$-tests and $H_{\tau,\lambda}(\cdot)$ for two-sided $t$-tests.

\begin{result}\label{Result1}
For any $\tau\in(0,1]$ and $\lambda\geq0$, $F_{\tau,\lambda}(u)$ is concave in $u\in(0,1)$.
\end{result}

\begin{remark}\label{Remark2} When $\lambda=0$, $F_{\tau,\lambda}(u)$ reduces to $F_{\tau}(u)$. Result~\ref{Result1} therefore implies that, for any $\tau\in(0,1]$:
(i) $F_{\tau}(u)$ is concave, and hence its inverse $F_{\tau}^{-1}(u)=F_{1/\tau}(u)$ is convex in $u\in(0,1)$; (ii) since $F_{\tau}(0)=F_{\tau}^{-1}(0)=0$, the ratios $F_{\tau}(u)/u$ and $F_{\tau}^{-1}(u)/u$ are, respectively, decreasing and increasing in $u\in(0,1)$.

It should be noted that, for two-sided $z$-tests, the PTDN property of $\widetilde P_{i,\tau}$, or of its minorant $\widehat P_{i,\tau}$, stated in Lemma~\ref{Lemma1}, holds in the stronger form described in Remark~\ref{Remark1}(i), as a consequence of Result~\ref{Result1} and the above concavity and convexity properties. For two-sided $t$-tests, however, the PTDN condition holds in its original form, as given in Definition~\ref{Definition1}, and for that the following additional result is required.
\end{remark}

\begin{result}\label{Result2}
Let $V\sim\chi_\nu^2$. Then, for any fixed $\lambda\geq0$, $\theta\in(0,1]$, and any decreasing function $g(\cdot)$ of $V$,
\[
\frac{1}{u}
\E\left[
\bar{\Psi}_{m;\lambda}
\left\{
\theta V\bar{\Psi}_{m,\nu}^{-1}(u)
\right\}
g(V)
\right]
\]
is decreasing in $u\in(0,1)$.
\end{result}

The aforementioned upper bound obtained by applying Lemma~\ref{Lemma1} to Theorem~\ref{Theorem1} is given in the following theorem.

\begin{theorem}\label{Theorem2} (\cite{GhoshSarkar2025})
Consider the step-up procedure based on the $\widetilde P_{i,\tau}$'s, for a fixed $\tau\in(0,1]$, with critical constants
$\alpha_i= i\widetilde\alpha/{d}$, $i=1,\ldots,d$. Then,
\begin{eqnarray}
\FDR_{\mathrm{GSBH}}
&\leq&
\sum_{i\in\cI_0:\,\tau_i<\tau}
F_{\tau_i}
\left\{
\frac{1}{d}
F_{\tau_i/\tau}^{-1}(\widetilde\alpha)
\right\}
+
\sum_{i\in\cI_0:\,\tau_i \geq \tau}
F_{\tau}\left(\frac{\widetilde\alpha}{d}\right).
\end{eqnarray}
\end{theorem}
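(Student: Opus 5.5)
The plan is to apply Theorem~\ref{Theorem1} directly to the step-up procedure based on $\widetilde P_{i,\tau}=F_\tau^{-1}(P_i)$, with Lemma~\ref{Lemma1} supplying the required PTDN property coordinatewise. We then evaluate the resulting marginal null probabilities explicitly. We split the null indices according to whether $\tau_i\ge\tau$ or $\tau_i<\tau$.

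For $i\in\cI_0$ with $\tau\le\tau_i$, Lemma~\ref{Lemma1}(i) says $\widetilde P_{i,\tau}$ is itself PTDN on $\bs P_{-i}$. Such an $i$ can therefore be placed in $\mc S^c$ with $c=1>\widetilde\alpha$, contributing $\Prb(\widetilde P_{i,\tau}\le\widetilde\alpha/d)$. The null $P_i$ is uniform and $F_\tau$ is increasing, so
\[
\Prb(\widetilde P_{i,\tau}\le\widetilde\alpha/d)=\Prb\{P_i\le F_\tau(\widetilde\alpha/d)\}=F_\tau(\widetilde\alpha/d).
\]
Strictly, Theorem~\ref{Theorem1} is stated for a step-up test based on the $P_i$'s. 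We apply it to the collection $(\widetilde P_{1,\tau},\ldots,\widetilde P_{d,\tau})$, so the conditioning is on $\widetilde{\bs P}_{-i}$. Each $\widetilde P_{j,\tau}$ is an increasing function of $P_j$, so increasing functions of $\widetilde{\bs P}_{-i}$ are increasing functions of $\bs P_{-i}$. Hence PTDN on $\bs P_{-i}$ transfers to PTDN on $\widetilde{\bs P}_{-i}$, and this invariance (cf. Remark~\ref{Remark1}(ii)) is what justifies the application.

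For $i\in\cI_0$ with $\tau_i<\tau$, we use Lemma~\ref{Lemma1}(ii). The minorant
\[
\hat P_{i,\tau}=\widetilde\alpha\,\widetilde P_{i,\tau_i}/F^{-1}_{\tau_i/\tau}(\widetilde\alpha)
\]
is increasing in $\widetilde P_{i,\tau}$ and stochastically smaller than it. It is also PTDN over $(0,\widetilde\alpha]$. Theorem~\ref{Theorem1} requires $c\in(\widetilde\alpha,1]$, so one must check that PTDN over $(0,\widetilde\alpha]$ suffices. In the leave-one-out identity (\ref{eqn:FDR}), the thresholds used are $\alpha_{r}\le\widetilde\alpha$, so only $u\le\widetilde\alpha$ matters, and the closed endpoint is enough. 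I would verify this by rereading the telescoping step of (\ref{eqn:FDR}), where monotonicity of $\Prb(\cdot\le u\mid\cdot)/u$ is only invoked between $\alpha_r$ and $\alpha_{r+1}$.

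The contribution of these indices is $\Prb(\hat P_{i,\tau}\le\widetilde\alpha/d)=\Prb\{\widetilde P_{i,\tau_i}\le F^{-1}_{\tau_i/\tau}(\widetilde\alpha)/d\}$. Since $P_i$ is uniform, this equals $F_{\tau_i}\{F^{-1}_{\tau_i/\tau}(\widetilde\alpha)/d\}$. Summing the two groups gives the stated bound.

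The main obstacle is bookkeeping rather than analysis. One point is the matching of the index sets: $\mc S$ corresponds to $\tau_i<\tau$, where the minorant is genuinely needed, while boundary cases $\tau_i=\tau$ go to part (i). The other point, noted above, is the range $(0,\widetilde\alpha]$ versus $c>\widetilde\alpha$. If the latter causes trouble, I would extend $\hat P_{i,\tau}$'s PTDN slightly beyond $\widetilde\alpha$ using continuity, or rely directly on the fact that the proof of Theorem~\ref{Theorem1} only evaluates ratios at the critical constants $\alpha_1,\ldots,\alpha_d\le\widetilde\alpha$.
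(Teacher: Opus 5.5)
Your proposal is correct and follows the paper's route. The paper obtains this bound simply by applying Lemma~\ref{Lemma1} within Theorem~\ref{Theorem1}, splitting the nulls into $\tau_i\ge\tau$ and $\tau_i<\tau$ and evaluating the marginal probabilities using the uniformity of the null $P_i$. Your additional checks are details the paper leaves implicit, and you handle them correctly: PTDN on $\bs P_{-i}$ carries over to the shifted collection, and PTDN on $(0,\widetilde\alpha]$ suffices because every critical constant $\alpha_r=r\widetilde\alpha/d$ lies in that range.
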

For given $\tau_i$'s, calibrating $\widetilde{\alpha}$ in the step-up critical constants for the $\widetilde{P}_{i,\tau}$'s
using these bounds yields the level-$\alpha$ generalized shifted BH (GSBH) procedures of \citet{GhoshSarkar2025}.

\subsection{Revisit of SBH-1}
\label{subsec:SBH1}
The GSBH methodology allows the shift parameter $\tau$ to vary coordinatewise and taking $\tau=\tau_i$ for each $P_i$ yields $\widetilde P_{i,\tau_i} =  F_{\tau_i}^{-1}(P_i)$, $i=1,\ldots,d$. These coordinate-specific shifted $p$-values satisfy the PTDN condition (see Result~\ref{Result1} and Lemma~\ref{Lemma1}) and give rise to the FDR-controlling SBH-1 procedure of \citet{SarkarZhang2025}, recalled below for subsequent reference:

\begin{definition} (\emph{SBH-1, controlling FDR at level $\alpha$}). The BH procedure based on $\widetilde{P}_{i,\tau_i} = F_{\tau_i}^{-1}(P_i)$, $i=1, \ldots, d$, at level $\widetilde{\alpha} = d F^{-1}(\alpha/d)$, where $F(u) = \frac{1}{d}\sum_{i=1}^d F_{\tau_i}(u)$, $u \in (0,1)$, \end{definition}

Thus, SBH-1 can be viewed as the coordinatewise analog of the GSBH methodology. The following theorem provides a sufficient condition for potential power advantage of SBH-1 over the corresponding BH procedure. 

\begin{theorem}
\label{Theorem3}
Consider SBH-1 at level $\alpha$ for two-sided $z$-tests, and let
$R_{\mathrm{SBH-1}}$ and $R_{\mathrm{BH}}$ denote the numbers of rejections made by SBH-1 and the corresponding BH procedure, respectively. For each $i$ and $r\geq1$, define
\[ \mathcal A_{i,r} = \left\{ P_i\in \left( \frac{r\alpha}{d}, G_{\tau_i} \left\{ rG^{-1}\left(\frac{\alpha}{d}\right)
\right \} \right], \; R_{\mathrm{SBH-1}} = r, \; R_{\mathrm{BH}}\leq r \right\}. \] If $\Pr(\mathcal A_{i,r}) >0$, for some $i\in\cI_0^c$ and $r\geq1$, then, on $\mathcal A_{i,r}$, SBH-1 rejects the false null hypothesis $H_i$, whereas BH does not.
\end{theorem}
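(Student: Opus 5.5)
The argument is a direct unpacking of the two step-up rules on the event $\mathcal A_{i,r}$. First I write down what each procedure rejects. SBH-1 is the BH step-up procedure applied to $\widetilde P_{j,\tau_j}=G_{\tau_j}^{-1}(P_j)$ at level $\widetilde\alpha = d\,G^{-1}(\alpha/d)$. Its critical constants are therefore $r\widetilde\alpha/d = r\,G^{-1}(\alpha/d)$. When $R_{\mathrm{SBH-1}}=r\ge1$, the step-up rule rejects exactly those $H_j$ with $\widetilde P_{j,\tau_j}\le \widetilde P_{(r)}\le rG^{-1}(\alpha/d)$. In fact it rejects every $H_j$ with $\widetilde P_{j,\tau_j}\le rG^{-1}(\alpha/d)$: the $r$ smallest shifted $p$-values are rejected, and any further $j$ with $\widetilde P_{j,\tau_j}\le rG^{-1}(\alpha/d)$ would make at least $r+1$ values lie below $r\widetilde\alpha/d\le (r+1)\widetilde\alpha/d$, contradicting the maximality of $r$ (ties are handled by the ``$P_i\le P_{(R)}$'' convention). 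Similarly, BH at level $\alpha$ with $R_{\mathrm{BH}}=s$ rejects exactly those $H_j$ with $P_j\le P_{(s)}\le s\alpha/d$, and by the same maximality argument $P_{(s)}\le s\alpha/d$ and no $P_j$ exceeding $s\alpha/d$ is rejected.

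Second, I translate the upper endpoint of the interval in $\mathcal A_{i,r}$. By Remark~\ref{Remark2}, $G_{\tau_i}$ is a strictly increasing continuous bijection of $(0,1)$ with inverse $G_{\tau_i}^{-1}$. Hence
\[
P_i\le G_{\tau_i}\{rG^{-1}(\alpha/d)\} \iff \widetilde P_{i,\tau_i}=G_{\tau_i}^{-1}(P_i)\le rG^{-1}(\alpha/d).
\]
This requires $rG^{-1}(\alpha/d)<1$. If that fails, I interpret $G_{\tau_i}$ at arguments $\ge1$ as $1$, and the equivalence still holds trivially. Combined with $R_{\mathrm{SBH-1}}=r$ and the first step, this shows SBH-1 rejects $H_i$ on $\mathcal A_{i,r}$.

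Third, on $\mathcal A_{i,r}$ we have $R_{\mathrm{BH}}=s\le r$ and $P_i>r\alpha/d\ge s\alpha/d\ge P_{(s)}$. This holds when $s\ge1$; when $s=0$, BH rejects nothing at all. In either case BH does not reject $H_i$. Since $i\in\cI_0^c$, $H_i$ is false, and the conclusion holds on $\mathcal A_{i,r}$. The hypothesis $\Pr(\mathcal A_{i,r})>0$ only ensures that this advantage occurs with positive probability. It is worth remarking that the interval is nonempty only when $G_{\tau_i}\{rG^{-1}(\alpha/d)\}>r\alpha/d$, which is plausible for small $\tau_i$ by the concavity and ratio monotonicity in Remark~\ref{Remark2}, but this is not needed for the proof.

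The only delicate step is the first one: verifying that SBH-1 rejects all hypotheses whose shifted $p$-value falls below the $r$th critical constant, and not merely those at most $\widetilde P_{(r)}$. This is a standard property of step-up procedures, and I would state it carefully, with ties in mind, as a one-line lemma used for both procedures.
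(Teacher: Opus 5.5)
Your proof is correct. It is the direct unpacking of the two step-up rules that the paper leaves implicit, since Theorem~\ref{Theorem3} is stated without a written proof: on $\{R_{\mathrm{SBH-1}}=r\}$ every shifted $p$-value at most $rG^{-1}(\alpha/d)$ is rejected, the increasing bijection $G_{\tau_i}$ turns this into the upper endpoint of the interval, and $R_{\mathrm{BH}}\le r$ together with $P_i>r\alpha/d$ keeps $H_i$ out of the BH rejection set. Your caveat about $rG^{-1}(\alpha/d)\ge 1$ never arises: $G_\tau(u)\ge u$ for $\tau\le 1$ gives $G^{-1}(\alpha/d)\le\alpha/d$, and hence $rG^{-1}(\alpha/d)\le r\alpha/d\le\alpha<1$.
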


In Section~\ref{sec:numerical}, we provide a toy example illustrating the phenomenon described in the preceding theorem.  

\section{Dependence-adaptive bounds for the BH FDR}\label{sec:bounds}
Here, we derive lower and upper bounds for the finite-sample FDR of the original BH procedure for testing Gaussian means against two-sided alternatives using $z$-tests, together with an upper bound for the corresponding $t$-tests. These bounds follow directly from the results of the preceding section.

Setting $\tau=1$ and $\widetilde{\alpha} = \alpha$ in Theorem~\ref{Theorem2}, so that GSBH reduces to the level $\alpha$ BH based on the original $p$-values, we have
\[
\operatorname{FDR}_{\mathrm{BH}}
\leq
\sum_{i\in\cI_0}
F_{\tau_i}
\left\{
\frac{1}{d}
F_{\tau_i}^{-1}(\alpha)
\right\},
\]
with
$F_{\tau_i}(u) = \bar{\Psi}_{1} \left\{\tau_i\bar{\Psi}_{1}^{-1}(u)\right\}$ for two-sided $z$-tests, whereas
$F_{\tau_i}(u) = \bar{\Psi}_{1,\nu}\left\{\tau_i\bar{\Psi}_{1,\nu}^{-1}(u)\right\}$ for two-sided $t$-tests.

These are the aforementioned upper bound for FDR$_{\textrm{BH}}$. The lower bound for the two-sided $z$-tests can be obtained as follows.
From \eqref{eqn:FDR} and \eqref{eqn:Non-CentChi}, we have
\begin{eqnarray}\label{eq:BHFDR1}
\operatorname{FDR}_{\mathrm{BH}}
& = &
\sum_{i\in\cI_0}\sum_{r=0}^{d-1}
\frac{1}{r+1}
\E\left[
\Prb\left\{P_i\le\frac{(r+1)\alpha}{d}\mid\bs P_{-i}\right\}
\mathbb{I}\{R(\bs P_{-i})=r\}
\right] \end {eqnarray}
Convexity of $G_{\tau_i}^{-1}(u)$ (see, Remark \ref{Remark2}) and concavity of
\[\Prb\left( G_{\tau_i}^{-1}(P_i) \le u \mid \bs{X}_{-i} \right ) = G_{\tau_i, \lambda_{i}\left(\bs{X}_{-i} \right))}(u) \] (see, Result \ref{Remark2}) in $u \in (0,1)$ yield

\begin{eqnarray} \label{eq:BHFDR2} & & \frac{1}{r+1} \Prb \left(P_i \le \frac{r+1}{d}\alpha \mid\bs{P}_{-i}\right ) = \frac{1}{r+1} G_{\tau_i, \lambda_i \left ( \bs{P}_{-i} \right)} \left ( G_{\tau_i}^{-1}\left ( \frac{(r+1)\alpha}{d}\right ) \right ) \nonumber \\ & \ge & \frac{1}{r+1} G_{\tau_i, \lambda_i \left ( \bs{P}_{-i} \right)} \left ( (r+1)G_{\tau_i}^{-1}\left ( \frac{\alpha}{d}\right ) \right ) \ge \frac{1}{d} G_{\tau_i, \lambda_i\left (\bs{P}_{-i} \right)} \left ( dG_{\tau_i}^{-1}\left ( \frac{\alpha}{d}\right ) \right ).\end{eqnarray} which does not depend on $r$. Therefore, using \eqref{eq:BHFDR2} in \eqref{eq:BHFDR1} and summing over $r$ and then averaging over $\bs{P}_{-i}$, we have
\[\operatorname{FDR}_{\mathrm{BH}} \ge \frac{1}{d}\sum_{i\in\cI_0}
\Prb \left ( P_i\le G_{\tau_i}\left(d G_{\tau_i}^{-1} \left(\frac{\alpha}{d} \right ) \right ) \right )=
\frac{1}{d} \sum_{i\in\cI_0}G_{\tau_i} \left( d G_{\tau_i}^{-1}\left(\frac{\alpha}{d} \right ) \right ) \]

Thus, more formally we have the following theorem providing finite-sample, dependence-adaptive bounds for the FDR of the original BH procedure.

\begin{theorem}\label{Theorem4}
For the BH procedure at level $\alpha$, the following bounds hold:
\begin{align}
\frac{1}{d}\sum_{i\in\cI_0}
G_{\tau_i}
\left\{
dG_{\tau_i}^{-1}\left(\frac{\alpha}{d}\right)
\right\}
\leq \FDR_{\BH}
\leq
\sum_{i\in\cI_0}
G_{\tau_i}
\left\{
\frac{1}{d}G_{\tau_i}^{-1}(\alpha)
\right\},
\qquad &\text{for two-sided $z$-tests},
\label{eq:FDRbounds}\\
\FDR_{\BH}
\leq
\sum_{i\in\cI_0}
H_{\tau_i}
\left\{
\frac{1}{d}H_{\tau_i}^{-1}(\alpha)
\right\},
\qquad &\text{for two-sided $t$-tests}.
\label{eq:FDRbound-t}
\end{align}
Under independence, the lower and upper bounds for the two-sided $z$-tests, as well as the upper bound for the two-sided $t$-tests, all reduce to $|\cI_0|/{d}\alpha$, the exact FDR of the BH procedure.
\end{theorem}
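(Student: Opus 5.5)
The plan has three parts: specialize Theorem~\ref{Theorem2} for the two upper bounds, redo the leave-one-out lower-bound argument carefully, and then check the independence case.

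\textbf{Upper bounds.} Set $\tau=1$ and $\widetilde\alpha=\alpha$ in Theorem~\ref{Theorem2}. Since $F_1$ is the identity, $\widetilde P_{i,1}=P_i$, so the GSBH procedure is exactly BH at level $\alpha$. Every $\tau_i\le 1$, so all null indices fall in the first sum or, when $\tau_i=1$, the second. If $\tau_i=1$, then $F_{\tau_i}\{F_{\tau_i}^{-1}(\alpha)/d\}=\alpha/d=F_1(\alpha/d)$, so both cases give the single expression $\sum_{i\in\cI_0}F_{\tau_i}\{F_{\tau_i}^{-1}(\alpha)/d\}$. Taking $F=G$ gives the $z$-test upper bound and $F=H$ gives the $t$-test upper bound. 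The only point to check is that Lemma~\ref{Lemma1}(ii), used inside Theorem~\ref{Theorem2}, applies with $c=\widetilde\alpha=\alpha$. I expect to rely on the cited theorem as stated for this.

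\textbf{Lower bound ($z$-tests).} Start from the leave-one-out form \eqref{eq:BHFDR1}, conditioning on $\bs X_{-i}$, which determines $\bs P_{-i}$ and $\lambda_i$. By \eqref{eqn:Non-CentChi}, $\Prb(P_i\le u\mid \bs X_{-i})=G_{\tau_i,\lambda_i}(G_{\tau_i}^{-1}(u))$. The chain of inequalities then proceeds in three steps.
\begin{enumerate}
\item By Remark~\ref{Remark2}, $G_{\tau_i}^{-1}(u)/u$ is increasing, so $G_{\tau_i}^{-1}((r+1)\alpha/d)\ge (r+1)G_{\tau_i}^{-1}(\alpha/d)$. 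Since $G_{\tau_i,\lambda}$ is increasing, this lower-bounds the conditional probability.
\item By Result~\ref{Result1}, $G_{\tau_i,\lambda}$ is concave with $G_{\tau_i,\lambda}(0)=0$, so $G_{\tau_i,\lambda}(v)/v$ is decreasing. Hence $G_{\tau_i,\lambda}((r+1)v)/(r+1)\ge G_{\tau_i,\lambda}(dv)/d$ for $r+1\le d$.
\item The resulting bound is free of $r$, and $\sum_{r}\mathbb{I}\{R(\bs P_{-i})=r\}=1$. Taking expectations gives $\frac1d\Prb(P_i\le G_{\tau_i}(dG_{\tau_i}^{-1}(\alpha/d)))$, which equals $\frac1d G_{\tau_i}(dG_{\tau_i}^{-1}(\alpha/d))$ because null $P_i$ is marginally uniform.
\end{enumerate}

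The main obstacle is a domain issue: $dG_{\tau_i}^{-1}(\alpha/d)$ may exceed the range of $G^{-1}_{\tau_i}$, i.e.\ $G_{\tau_i}^{-1}(1)=1$, since the argument of $G_{\tau_i,\lambda}$ must lie in $(0,1)$. I would handle this by extending $G_{\tau,\lambda}(v)=1$ for $v\ge1$. This extension stays concave (it is increasing and capped at $1$), so the ratio argument in step~2 still holds. Equivalently, I would interpret $G_{\tau_i}$ at arguments $\ge 1$ as $1$.

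\textbf{Independence.} Under independence every $\tau_i=1$, so $G_1=H_1=\mathrm{id}$. Both $z$-bounds then become $\sum_{i\in\cI_0}\alpha/d=|\cI_0|\alpha/d$, and the $t$-upper bound reduces the same way. This matches the known exact BH FDR under independence.
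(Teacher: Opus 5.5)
Your proposal follows the paper's proof essentially step for step: the upper bounds come from specializing Theorem~\ref{Theorem2} at $\tau=1$, $\widetilde\alpha=\alpha$, and the lower bound comes from the leave-one-out identity, then convexity of $G_{\tau_i}^{-1}$, then concavity of the conditional CDF, then averaging over $\bs X_{-i}$. Two minor remarks: the domain issue you flag does not arise, since $G_{\tau}^{-1}(u)=\bar\Psi_1\{\bar\Psi_1^{-1}(u)/\tau\}\le u$ gives $dG_{\tau_i}^{-1}(\alpha/d)\le\alpha<1$; and the conditional CDF of $G_{\tau_i}^{-1}(P_i)=\bar\Psi_1(X_i^2/\tau_i)$ given $\bs X_{-i}$ is actually $G_{1,\lambda_i}$ rather than $G_{\tau_i,\lambda_i}$ (a slip you share with the paper), but it is still concave by Result~\ref{Result1} with $\tau=1$, so the argument goes through unchanged.
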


\begin{remark}
The bounds in Theorem~\ref{Theorem4} explicitly incorporate the conditional variance parameters $\tau_i=1-R_i^2$ and therefore exploit information about the underlying Gaussian dependence that is absent from generic distribution-free bounds. In particular, they recover the exact BH FDR under independence, where $\tau_i=1$ for all $i$, and adapt continuously to the covariance structure. This contrasts with the Benjamini--Yekutieli (BY) bound
\[
\operatorname{FDR}_{\mathrm{BH}}
\leq
\frac{|\cI_0|}{d}\alpha
\sum_{r=1}^{d}\frac{1}{r},
\]
which is valid under arbitrary dependence and applies far beyond the Gaussian setting, but does not distinguish among dependence structures.

The equi-correlated Gaussian model provides a particularly transparent illustration. If
\[
\bs\Sigma_\rho
=
(1-\rho)\bs I_d+\rho\bs 1_d\bs 1_d^\prime,
\]
then the conditional variance parameters are identical and given by
\[
\tau_i=\tau(\rho)
=
\frac{(1-\rho)\{1+(d-1)\rho\}}
     {1+(d-2)\rho},
\qquad i=1,\ldots,d.
\]
Thus, the bounds in Theorem~\ref{Theorem4} reduce to explicit finite-sample functions of the correlation parameter $\rho$. At $\rho=0$, $\tau(\rho)=1$ and the bounds recover the independence result, whereas increasing positive correlation decreases $\tau(\rho)$ and continuously modifies the resulting FDR bounds. Similar adaptation occurs for other structured covariance models, such as Toeplitz, autoregressive, and block-correlated structures, through their coordinate-specific $\tau_i$'s. Numerical studies in Section~\ref{sec:numerical} indicate that, under weak to moderate dependence, the resulting upper bound can be substantially sharper than the BY bound.

For two-sided Gaussian $z$-tests, Theorem~\ref{Theorem4} also provides a dependence-adaptive lower bound. This gives a finite-sample floor on the BH FDR for a specified covariance structure and, being bounded above by $(|\cI_0|/d)\alpha$, quantifies the extent to which BH can be conservative without suggesting FDR inflation above $\alpha$. The lower and upper bounds therefore provide a two-sided, covariance-adaptive characterization of the finite-sample BH FDR.

These results address a different question from other dependence-sensitive bounds for BH. The FDR-linking theorem of \citet{Su2018} yields a generic $O\{\alpha\log(1/\alpha)\}$ upper bound under positive regression dependence within the nulls, while \citet{Lei2026} establishes the sharper optimal order $O\{\alpha\sqrt{\log(1/\alpha)}\}$ for common-factor Gaussian models. Those results characterize generic or worst-case behavior over classes of dependence structures. In contrast, Theorem~\ref{Theorem4} conditions its characterization on the specified covariance matrix through the $\tau_i$'s. The equi-correlated model makes this distinction especially clear: rather than taking a worst case over a common-factor class, the proposed bound varies explicitly with the given value of $\rho$. It is therefore of interest to determine whether, under suitable common-factor asymptotics, this covariance-specific bound recovers the optimal order identified by \citet{Lei2026}, while potentially providing sharper finite-sample information over practically relevant correlation regimes.

The corresponding result for two-sided $t$-tests extends this covariance-adaptive perspective to the setting in which the common scale parameter is unknown and estimated independently. Thus, beyond the universally valid BY bound, the present results provide covariance-specific finite-sample information for both two-sided Gaussian $z$- and $t$-testing, while the lower bound available in the $z$-testing case further quantifies the possible conservativeness of the original BH procedure.
\end{remark}

\section{Unknown covariance matrix}\label{sec:unknown}
We now extend the shifted-BH methodology to the following practically important setting in which the covariance matrix is unknown but can be estimated independently:
\begin{equation}\label{eq:wishart-model}
\bs X\sim N_d(\bs\mu,\bs\Sigma),
\qquad
\bs A\sim W_d(n,\bs\Sigma),
\qquad
\bs X\perp\bs A,
\end{equation}
where $n \geq d$. As before, we consider multiple testing of $H_i:\mu_i=0$ against $K_i:\mu_i\neq 0$, $i=1,\ldots,d$, and seek a finite-sample FDR controlling procedure.

Our starting point is SBH-1 for two-sided $z$-tests, recalled in Section~\ref{subsec:SBH1}. We seek to adapt its dependence-adjustment mechanism to the unknown-covariance setting using the independently available Wishart matrix $\bs{A}$. Recall that SBH-1 transforms the original marginal statistics $X_i^2/\sigma_{ii}$, $i=1,\ldots, d$, into the conditional-variance-adjusted statistics $X_i^2/\sigma_{ii}(1-R_i^2)$, $i=1,\ldots, d$. In the present setting, the conditional variance $\sigma_{ii}(1-R_i^2) = 1/(\bs\Sigma^{-1})_{ii}$, is unknown. A natural initial approach is therefore to replace it by the corresponding sample quantity obtained from $\bs{A}$, $1/(\bs A^{-1})_{ii} = a_{ii}(1-\widehat R_i^2)$, $i=1,\ldots,d$, where $\widehat R_i^2$ denotes the sample squared multiple correlation computed from $\bs{A}$. This leads to the transformed statistics $X_i^2/ a_{ii}(1-\widehat R_i^2)$, $i=1,\ldots,d$, which provide the natural plug-in analogs of the PTDN-inducing statistics underlying SBH-1.

Unfortunately, however, the PTDN property underlying the proof of FDR control cannot, in general, be established for these plug-in transformations. To overcome this difficulty, we replace the unknown dependence parameters $\tau_i = 1-R_i^2, \; i=1,\ldots,d$, by suitably constructed lower confidence bounds $L_i$ derived from $\bs{A}$. Since the testing problem is scale invariant, we may assume without loss of generality that $\bs\Sigma$ is a correlation matrix, so that the relevant dependence parameters are precisely the $\tau_i$'s. Correspondingly, the oracle transformations $G_{\tau_i}(\cdot) = \bar{\Psi}_1(\tau_i\bar{\Psi}_1^{-1}(\cdot))$ are replaced by their data-dependent counterparts $G_{L_i}(\cdot)$, leading to the confidence-bound shifted $p$-values $G_{L_i}^{-1}(P_i), \; i=1,\ldots,d$. These reduce to their oracle counterparts when $L_i=\tau_i$. The resulting procedure may therefore be viewed as a confidence-bound analog of the PTDN-based shifted procedures developed for the known-covariance setting.

We are now ready to introduce the proposed procedure with unknown covariance matrix. To this end, let $C_i = a_{ii}(1-\hat{R}_i^2)$. Fix a confidence error level $\beta \in (0, \alpha)$ and choose values $\beta_i \in (0,\beta)$, $i=1, \ldots, d$, satisfying  $\sum_{i=1}^d \beta_i \le \beta$. Define $L_i = C_i/\bar{\Psi}_{\nu}^{-1}(\beta_i)$, where $\nu = n-d+1$. Since $\bs{A} \sim W_d(n, \bs{\Sigma})$, $C_i \sim \tau_i \chi_{n-d+1}^2$, independently of $\bs{A}_{-i, -i}$; see, for example, \cite{Anderson2003book}. Consequently, $\Prb_{\bs{\Sigma}}(L_i \le \tau_i) = 1 - \beta_i$, and, therefore, by Bonferroni, $\Prb_{\bs{\Sigma}}(L_i \le \tau_i, \; \mbox{for all} \; i = 1, \ldots, d ) \ge 1 - \beta$. We denote this simultaneous-coverage event by $\mc{E}_L$.

\begin{definition}
\label{DefCBSBH}
(\emph{Confidence-Bound Shifted BH (CBSBH)}). For each $i=1,\ldots,d$, let  $P_i=\bar{\Psi}_1(X_i^2)$ be the marginal Gaussian $p$-values. Given the lower confidence bounds $L_i$ constructed above, define the confidence-bound shifted $p$-values $\widetilde{P}_{i,L_i} = G_{L_i}^{-1}(P_i), \; i = 1,\ldots,d$. Further, let $\bar{G}_{L}(u) = \frac{1}{d}\sum_{i=1}^d G_{L_i}(u), \; u \in (0,1)$, and use the step-up procedure based on the $\widetilde{P}_{i,L_i}$'s with  critical constants $i \bar{G}_{L}^{-1}((\alpha - \beta)/d), \; i = 1, \ldots, d$.\end{definition}

\begin{theorem}\label {Theorem5} The CBSBH controls the FDR at level $\alpha$.
\end{theorem}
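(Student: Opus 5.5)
Because $\bs X\perp\bs A$ and the $L_i$'s depend only on $\bs A$, I would condition on $\bs A$ and decompose the FDR over the coverage event $\mc E_L$:
\[
\FDR = \E\bigl[\FDP\,\mathbb{I}(\mc E_L^c)\bigr] + \E\bigl[\FDP\,\mathbb{I}(\mc E_L)\bigr]
\le \Prb(\mc E_L^c) + \E\bigl[\mathbb{I}(\mc E_L)\,\E\{\FDP\mid \bs A\}\bigr].
\]
The first term is at most $\beta$ by the Bonferroni argument preceding Definition~\ref{DefCBSBH}. Given $\bs A$, the $L_i$'s and the critical constants $i\,\bar G_L^{-1}((\alpha-\beta)/d)$ are fixed numbers. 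Also, $\bs X$ keeps its $N_d(\bs\mu,\bs\Sigma)$ law with $\bs\Sigma$ a correlation matrix. So it suffices to show that, on $\mc E_L$, the conditional FDR is at most $\alpha-\beta$.

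Fix $\bs A$ with $\mc E_L$ holding, so that $L_i\le\tau_i$ for all $i$. Write $\widetilde\alpha = d\,\bar G_L^{-1}((\alpha-\beta)/d)$, which makes the critical constants $i\widetilde\alpha/d$, as required in Theorem~\ref{Theorem1}. The key step is to show that each null $\widetilde P_{i,L_i}=G_{L_i}^{-1}(P_i)$ is PTDN on $\bs P_{-i}$. This is Lemma~\ref{Lemma1}(i) applied coordinatewise with $\tau=L_i\le\tau_i$. To check it directly, use \eqref{eqn:Non-CentChi}:
\[
\Prb\bigl(G_{L_i}^{-1}(P_i)\le u\mid \bs X_{-i}\bigr)=\bar\Psi_{1;\lambda_i}\bigl\{(L_i/\tau_i)\bar\Psi_1^{-1}(u)\bigr\}=G_{L_i/\tau_i,\lambda_i}(u).
\]
Since $L_i/\tau_i\in(0,1]$, Result~\ref{Result1} shows this is concave in $u$, so the strong form of PTDN in Remark~\ref{Remark1}(i) holds. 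Conditioning on $\bs X_{-i}$ is finer than conditioning on $\bs P_{-i}$, and averaging preserves the decreasing-ratio property. Hence every null coordinate falls in the first case of Theorem~\ref{Theorem1}, with $c=1$ and $\mc S=\emptyset$.

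Theorem~\ref{Theorem1} then gives, on $\mc E_L$,
\[
\E\{\FDP\mid\bs A\}\le\sum_{i\in\cI_0}\Prb\bigl(G_{L_i}^{-1}(P_i)\le\widetilde\alpha/d\bigr)=\sum_{i\in\cI_0}G_{L_i}(\widetilde\alpha/d),
\]
because $P_i$ is uniform under $H_i$. Dropping the constraint $i\in\cI_0$ bounds this by $\sum_{i=1}^dG_{L_i}(\widetilde\alpha/d)=d\,\bar G_L(\widetilde\alpha/d)=\alpha-\beta$. Combining the two pieces gives $\FDR\le\beta+(\alpha-\beta)=\alpha$.

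The main obstacle is making the conditioning on $\bs A$ rigorous. Theorem~\ref{Theorem1} and Lemma~\ref{Lemma1} are stated for deterministic shift parameters, so one must confirm that independence of $\bs X$ and $\bs A$ lets us treat the $L_i$'s and $\widetilde\alpha$ as constants. One must also confirm that Lemma~\ref{Lemma1}(i) applies with a different shift $\tau=L_i$ for each coordinate. Theorem~\ref{Theorem1} needs only coordinatewise PTDN (Remark~\ref{Remark1}(iv)), and $L_i$ is fixed given $\bs A$, so both points should go through. I would also verify that $\widetilde\alpha<1$, which is needed for Theorem~\ref{Theorem1} with $\widetilde\alpha\in(0,1)$, and check the edge case $L_i>1$, where $G_{L_i}$ is still well defined and the concavity argument above is unaffected because $L_i/\tau_i\le1$.
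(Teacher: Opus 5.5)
Your proof is correct and follows the paper's argument: condition on $\bs A$, split the FDR over the coverage event $\mc E_L$, use $L_i\le\tau_i$ to obtain PTDN of the shifted $p$-values and hence a conditional FDR of at most $\alpha-\beta$ on $\mc E_L$, and bound the contribution of the complement by $\Prb(\mc E_L^c)\le\beta$. Your explicit identification of the conditional law as $G_{L_i/\tau_i,\lambda_i}$ and the summation $\sum_i G_{L_i}(\widetilde\alpha/d)=\alpha-\beta$ spell out what the paper compresses into an appeal to SBH-1. The check you left open, $\widetilde\alpha<1$, is immediate on $\mc E_L$: there $L_i\le 1$, so $\bar G_L(u)\ge u$ and hence $\widetilde\alpha\le\alpha-\beta$.
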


\begin{proof} Conditional on $\bs{A}$ and on the event $\mathcal{E}_L$, the quantities $L_1, \ldots, L_d$ are fixed and satisfy $L_i\le \tau_i$ for all $i=1, \ldots, d$. Hence, the $\widetilde{P}_{i,L_i}$'s satisfy the PTDN condition; see, Result \ref{Result1}(i). Consequently, the corresponding SBH-1 procedure at level $d\bar{G}_{L}^{-1}((\alpha - \beta)/d)$, that is, the step-up procedure based on the  $\widetilde{P}_{i,L_i}$'s with  critical constants $i \bar{G}_{L}^{-1}((\alpha - \beta)/d), \; i = 1, \ldots, d$, controls the FDR at level $\alpha-\beta$ on $\mathcal{E}_L$. Therefore, using FDP $< 1$,
\begin{eqnarray}
\FDR = \E(\FDP\mathds 1_{\mathcal E}) + \E (\FDP\mathds 1_{\mathcal E^c}) \le \alpha - \beta + \Prb(\mathcal E^c) \le \alpha - \beta + \beta = \alpha,
\end{eqnarray} which proves the theorem.
\end{proof}

\section{Numerical investigations}\label{sec:numerical}

\subsection{SBH-1 can outperform BH}
\label{subsec:SBH1 simulation}
Here, we provide a simple numerical example showing SBH-1 can be more powerful than the original BH in the two-sided Gaussian mean testing problem under dependence.

Suppose $\bs{X} \sim N_{10}(\bs{\mu},\bs{\Sigma})$, where the correlation matrix $\Sigma$ is such that the first three coordinates form a strongly equi-correlated block with the common correlation $0.8$ and the remaining seven coordinates are independent of one another and of the first block. We test $\mu_i=0$ against $\mu_i\neq 0$ using the usual two-sided $z$-test $p$-values $P_i = \bar{\Psi}_1(X_i^2)$, $i=1,\dots,10$, at the target FDR level $\alpha=0.05$.

\ignore{For this covariance matrix, $\tau_i = 0.2888889$ for $i=1,2,3$, and $\tau_i=1$ for $i=4, \ldots, 10,$ and $G^{-1}(\alpha/d) \approx
8.3951\times 10^{-6}$. Therefore, in the present example, the SBH-1 is equivalently  the BH at level $q \approx 8.3951\times 10^{-5}$ applied to $\bar{\Psi}_1 \left( X_i^2/\tau_i\right )$, $i=1, \ldots, 10$. By contrast, the original BH is applied to $\bar{\Psi}_1(X_i^2)$, $i=1, \ldots, 10$, at level $q=0.05$.

Let $ G(u) = \frac{1}{10}\sum_{i=1}^{10} G_{\tau_i}(u)$, where $ G_{\tau}(u) = \bar\Psi\bigl(\tau\,\bar\Psi^{-1}(u)\bigr)$, $0 \le u \le 1$. The SBH-1 operates as follows: (i) transform each $p$-value to $\widetilde{P}_i = G_{\tau_i}^{-1}(P_i)$; (ii)let $u_{\alpha} = G^{-1}(\alpha/10)$; (iii) apply BH to the transformed $p$-values $\widetilde{P}_i$ with critical constants $r u_{\alpha}$, $r=1,\dots,10$.

As we have already seen, we have $u_\alpha = G^{-1}(0.005) \approx 8.3951\times 10^{-6}$. Therefore, at the first step-up stage ($r=1$), the original-scale SBH-1 critical values are
\[
G_{\tau_i}(u_\alpha)
=
\begin{cases}
0.0166471, & i=1,2,3,\\
8.3951\times 10^{-6}, & i=4,\dots,10.
\end{cases}
\]

By contrast, the BH critical value at $r=1$ is simply $\alpha/10 = 0.005$. Thus, for the strongly dependent coordinates $i=1,2,3$, SBH-1 is considerably less stringent than BH.

Now, consider a data-realization for which the ordered $p$-values satisfy $P_1=0.010, P_2=\cdots=P_{10}=0.50$, and suppose that the signal is in the first coordinate. Then BH makes no rejection, because its first critical value is $0.005$ and $0.010 > 0.005$. However, SBH-1 rejects $H_{01}$, because for the first coordinate $0.010 < 0.0166471 = G_{\tau_1}(u_\alpha)$. Equivalently, the transformed $p$-value is
$\widetilde{P}_1 = G_{\tau_1}^{-1}(0.010) \approx 1.6481\times 10^{-6}$, which is below the first SBH-1 transformed cutoff $u_\alpha
\approx 8.3951\times 10^{-6}$. So, on this same data set, SBH-1 detects the false null while BH does not.}

For this covariance matrix, $\tau_i = 0.2888889$ for $i=1,2,3$, and $\tau_i=1$ for $i=4, \ldots, 10,$. Let $ G(u) = \frac{1}{10}\sum_{i=1}^{10} G_{\tau_i}(u)$, where $G_{\tau}(u) = \bar\Psi\bigl(\tau\,\bar\Psi^{-1}(u)\bigr)$, $0 \le u \le 1$. Then $G^{-1}(\alpha/10) \approx
1.194959\times 10^{-6}$. Therefore, in the present example, the SBH-1 is equivalently the BH at level $q = 10G^{-1}(\alpha/10) \approx 1.194959\times 10^{-5}$ applied to $\widetilde{P}_i = G_{\tau_i}^{-1}(P_i) = \bar{\Psi}_1 \left( X_i^2/\tau_i\right )$, $i=1, \ldots, 10$. By contrast, the original BH is applied to $P_i$, $i=1, \ldots, 10$, at level $q=0.05$.


In SBH-1, the critical constants are $ru_{\alpha}$, $r=1, \ldots, 10$, where $u_{\alpha} = G^{-1}(0.005) \approx 1.194959\times 10^{-6}$. Therefore, at the first step-up stage ($r=1$), the original-scale SBH-1 critical values are
\[
G_{\tau_i}(u_\alpha)
=
\begin{cases}
9.046944\times 10^{-3}, & i=1,2,3,\\
1.194959\times 10^{-6}, & i=4,\dots,10.
\end{cases}
\]
By contrast, the BH critical value at $r=1$ is simply $\alpha/10 = 0.005= 5 \times10^{-3}$. Thus, for the strongly dependent coordinates $i \in \mc{J}=\{1,2,3\}$, SBH-1 is considerably less stringent than BH.

Now, consider a data-realization for which the ordered $p$-values satisfy $P_1=0.008, P_2=\cdots=P_{10}=0.50$, and suppose that the signal is in the first coordinate. Then BH makes no rejection, because its first critical value is $0.005$ and $0.008 > 0.005$. However, SBH-1 rejects $H_{01}$, because for the first coordinate $0.008 < 0.009046944 = G_{\tau_1}(u_\alpha)$. Equivalently, the transformed $p$-value is
$\widetilde{P}_1 = G_{\tau_1}^{-1}(0.008) \approx 8.04669\times 10^{-7}$, which is below the first SBH-1 transformed cutoff $u_\alpha
\approx 1.194959\times 10^{-6}$. So, on this same data set, SBH-1 detects the false null while BH does not.

To show that this is not merely a hand-picked realization, consider the mean vector $\bs{\mu} = (2.6,0,0,0,0,0,0,0,0,0)^\top$,
so that only the first hypothesis is false, and it is located in the strongly correlated block.

A Monte Carlo experiment with $50{,}000$ replications gives the following empirical performance:

\ignore{\begin{center}
\begin{tabular}{lccc}
\toprule
Procedure & Empirical power for $H_{01}$ & Average number of rejections & Empirical FDR \\
\midrule
BH    & 0.4228 & 0.4937 & 0.0434 \\
SBH-1 & 0.5840 & 0.6200 & 0.0215 \\
\bottomrule
\end{tabular}
\end{center}}

\begin{center}
\begin{tabular}{lccc}
\toprule
Procedure & Empirical power  & Average number of rejections & Empirical FDR \\
\midrule
BH    & 0.21018 & 0.04991 & 0.04680 \\
SBH-1 & 0.24587 & 0.05125 & 0.01264 \\
\bottomrule
\end{tabular}
\end{center}

Thus, in this configuration, SBH-1 has higher power than BH. The gain comes from the fact that the non-null is placed in a coordinate with small $\tau_i$, that is, a coordinate highly predictable from the others. In such coordinates the transformation $G_{\tau_i}^{-1}(\cdot)$ shrinks the $p$-values, thereby relaxing the effective rejection threshold relative to BH.

This configuration also enables to understand the form of dependence necessary among the null and non-null hypotheses so that improved power over the BH can be achieved by SBH-1. The overarching philosophy of strong pairwise correlations existing only among the true effects while being completely independent of the null effects invokes a study into scalable improvement along with the introduction of minute correlations among the nulls. We go one further step to solidify the advantages of SBH-1 over BH in a broader simulation design. 

Consider an extended design where $\bs{X} \sim N_d (\bs{\mu}, \bs{\Sigma})$, with arbitrarily chosen $10\%$ of the mean effects as non-zero. $\bs{\Sigma}$ has correlations divided into two equicorrelated blocks: the set of nulls face pairwise correlations of $0.1$ while the set of non-nulls inherit strong pairwise correlations of $0.9$. For this correlation matrix, $\tau_i \in ( 0.1109606, 0.1892857)$ for $i \in \cI_0^C$ and $\tau_i \in (0.9090807, 0.9332609)$ for $i \in \cI_0$. When Monte Carlo simulations are run at level $\alpha = 0.05$ with varying number of moderately sized tests, the empirically averaged power can be visualized as in Fig~\ref{fig:SBH1_over_BH}. While controlling the FDR at desired level, SBH1 shows scalable improvement in detecting true signals over BH and dBH in cases where there is a strong affinity among the non-null effects. Moderate sizes of samples are often encountered in clinical trials whereas clustered impact of small number of significant traits are abundantly observed in genomic studies, underpinning the relevance and high odds of encountering such scenarios. Reverberating to the sufficient condition in Theorem~\ref{Theorem3}, our examples explicitly make the required probability positive by strategically constructing correlations encompassing the true effects only.
\begin{figure}[]
    \centering
        \includegraphics[width=\linewidth, height=0.35\textheight,
            keepaspectratio]{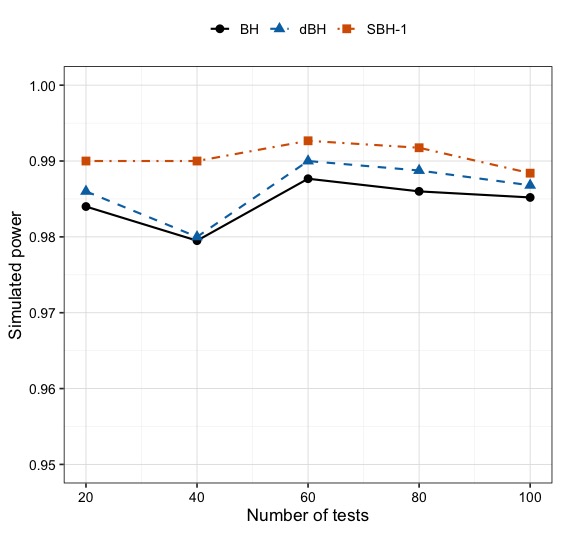}
    \caption{
        Power comparison of BH, dBH and SBH-1 methods
    }
    \label{fig:SBH1_over_BH}
\end{figure}

These examples illustrate the mechanism behind the power gain of SBH-1. When dependence is heterogeneous, coordinates with smaller $\tau_i=1-R_i^2$ receive more favorable calibration under SBH-1. If the true signals are concentrated in those coordinates, then SBH-1 can be strictly more powerful than BH, even though both procedures are applied to the same original set of two-sided marginal $p$-values. In particular, the present example shows that strong local dependence can be exploited by SBH-1 to recover signals that BH misses.

\subsection{Comparison of covariance-adaptive and BY bounds}
\label{supp-examples of bounds}


We explore scenarios where the covariance-adaptive upper bound would be sharper than the BY bound. To demonstrate the role of correlations being as important as the number of tests, we observed both the upper bounds across varying number of tests. Three individual correlation structures were considered, where one is a Toeplitz matrix, one compound symmetric and one has a block structure where elements within a group are equicorrelated. Small deviations from independence equates to small correlations incorporated in the three structures.

\begin{figure}[]
    \centering

    \begin{subfigure}[t]{0.5\linewidth}
        \centering
        \includegraphics[width=\linewidth, height=0.27\textheight,
            keepaspectratio]{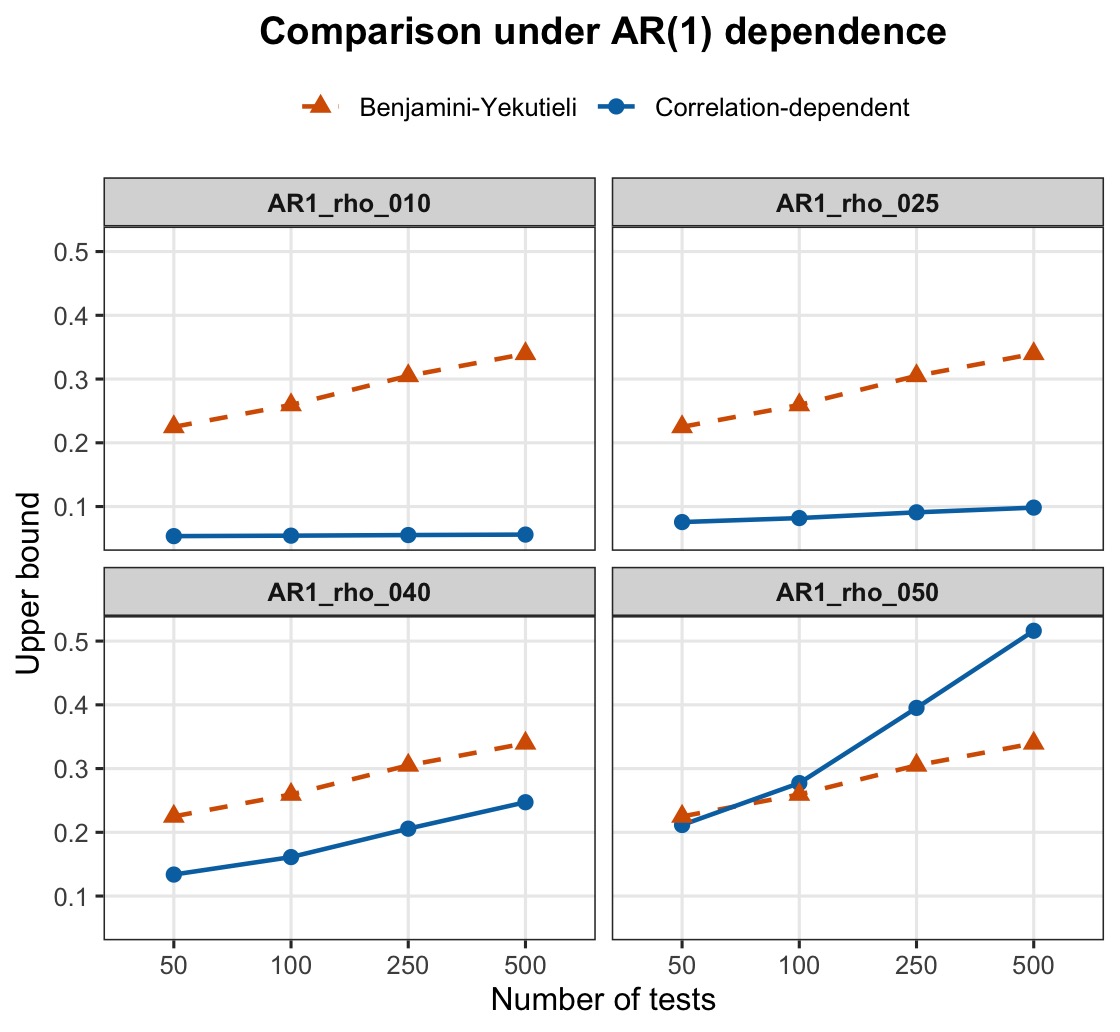}
        \caption{}
    \end{subfigure}
    \begin{subfigure}[t]{0.5\linewidth}
        \centering
        \includegraphics[width=\linewidth, height=0.27\textheight,
            keepaspectratio]{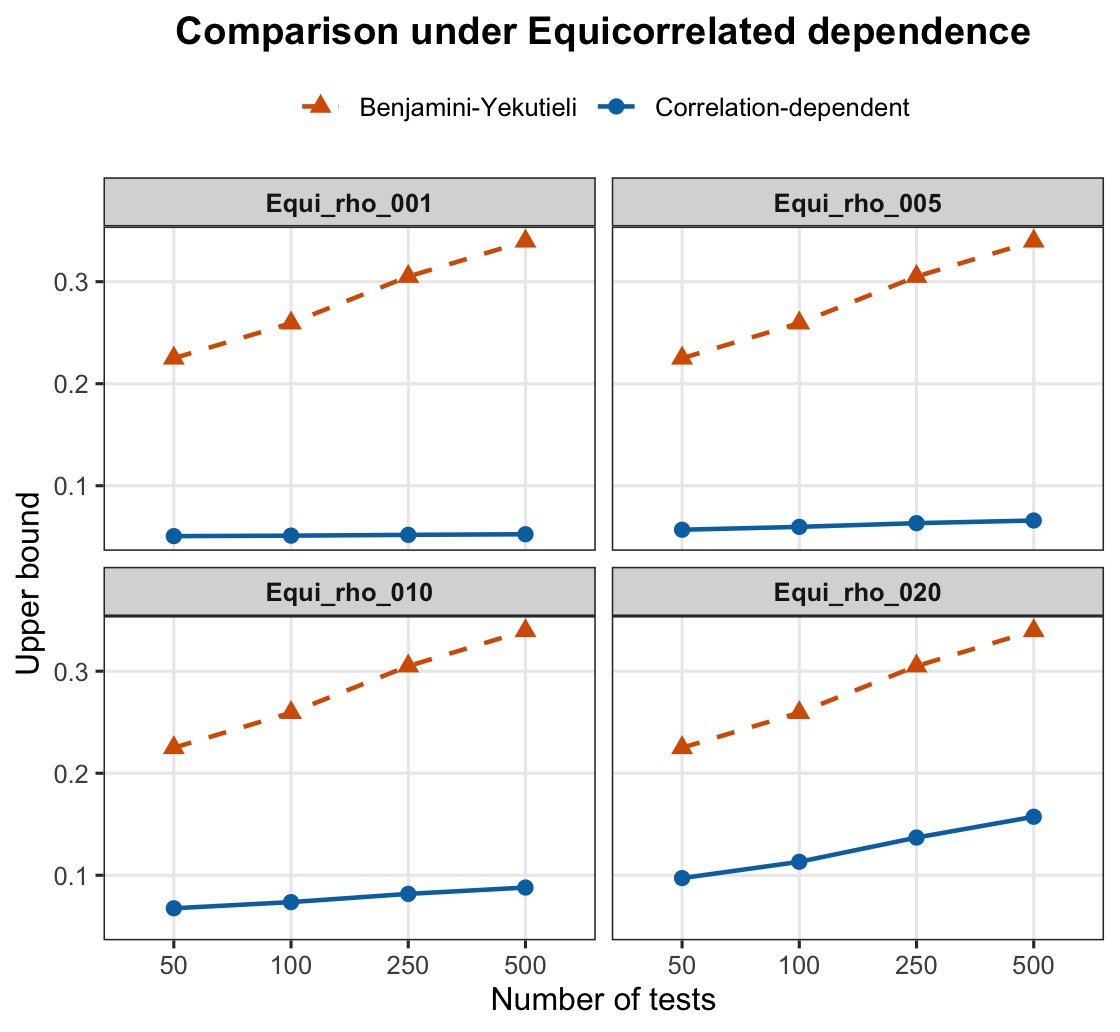}
        \caption{}
    \end{subfigure}
    \begin{subfigure}[t]{0.5\linewidth}
        \centering
        \includegraphics[width=\linewidth, height=0.27\textheight,
            keepaspectratio]{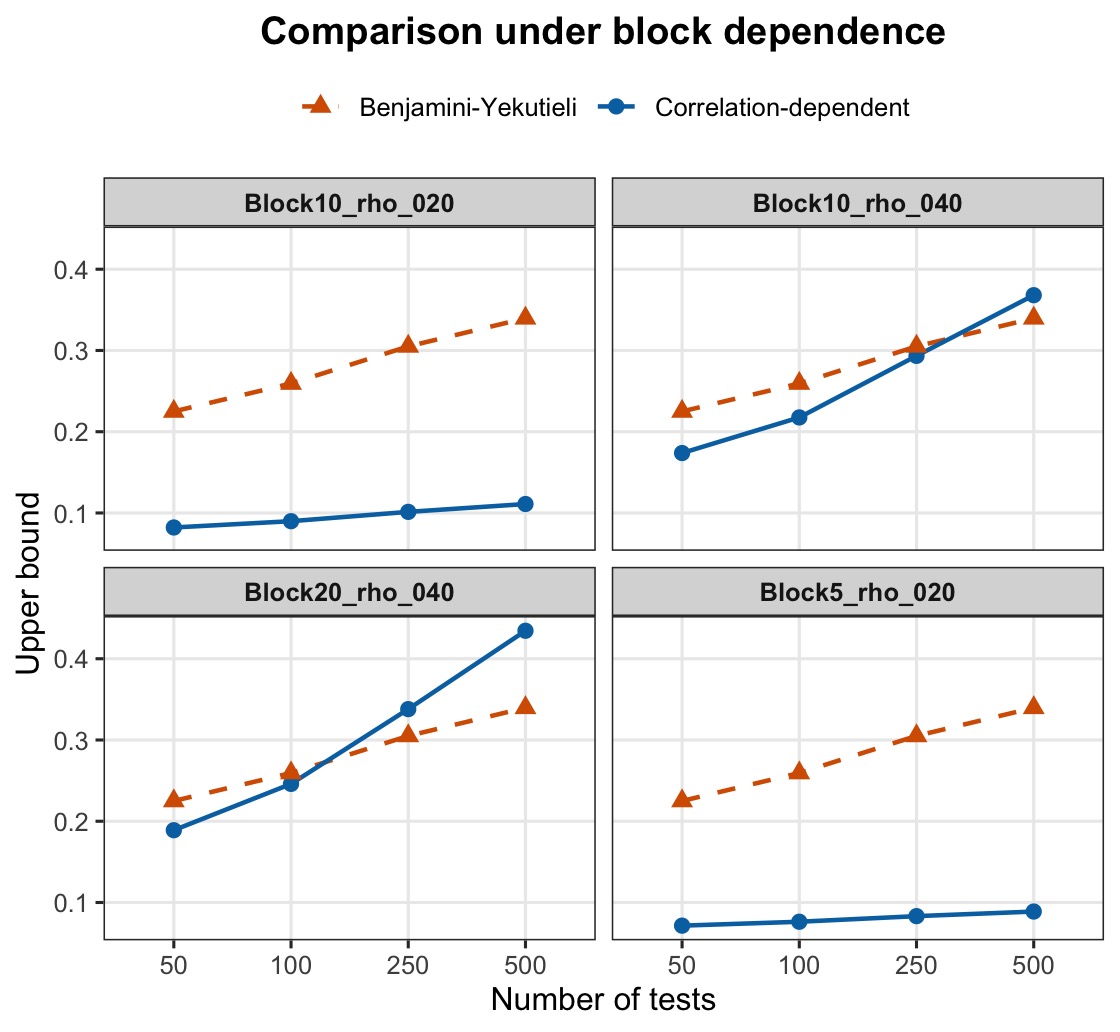}
        \caption{}
    \end{subfigure}\hfill

    \caption{
        The Benjamini-Yekutieli bound and the correlation dependent bound of the FDR displayed for different correlation structures $\bs{\Sigma}$ when $\alpha=0.05$
    }
    \label{fig:fdr-comparisons}
\end{figure}

Figure \ref{fig:fdr-comparisons} demonstrates some of the scenarios where the newer bound acts sharper than the BY bound. The sub-figure titles indicate the structure and the extent of correlations of $\bs{\Sigma}$. For example, \verb|AR1_rho_010| refers to an autoregressive of order $1$ matrix, that has $(i,j){th}$ element as $\rho^ {\mid i-j \mid}$, with correlations $\rho=0.10$. Similarly, \verb|Block20_rho_040| refers to a block diagonal matrix where each block of size 20 is equicorrelated with common correlation $\rho=0.40$. In the circumstances where all concerned elements share a common pairwise correlation $\rho$ among each other, the new bound behaves sharply for nominally small values of $\rho$, like $0.1, 0.2$, Fig~\ref{fig:fdr-comparisons}(b). The shifts $\tau_i$ can be explicit and equal in this case, that is $\tau_i = \tau = (1-\rho)\{1+(d-1)\rho\}/\{1+(d-2)\rho\}$. This provides a direct way of knowing if the correlation-dependent upper bound will be sharper for a given combination of $(d,\alpha)$ under a global null hypothesis. For the decaying correlation structure of Fig~\ref{fig:fdr-comparisons}(a), the new bound fails to be sharp at $\rho=0.5$. Indications of failure are also noticed in the block diagonal structure with increasing number of block size or increasing pairwise correlation. Not reducing the universality of the BY bound across arbitrary dependence and distributional assumptions, we proceed to inculcate a different perspective on the FDR bound of the BH method when correlations are known beforehand.

\subsection{Performance of CBSBH}
\label{subsec:CBSBH numerical}

Performance is evaluated using two criteria: empirical power, defined as the average proportion of false null hypotheses rejected, and empirical FDR, estimated as the Monte Carlo average of the false discovery proportion. Three types of correlation structures are studied -- equicorrelated, autoregressive of order 1 and inverse autoregressive of order 1 with the associated correlation parameter $\rho =0.5$. FDR gets controlled at $\alpha=0.05$. We vary the number of tests and, in a separate experiment, the Wishart degrees of freedom, where the signal strength varies in non-linear proportions of the degrees, to investigate how the amount of information available for covariance estimation affects performance. The confidence-bound shifted BH procedure is compared with BY, which provides finite-sample FDR control under arbitrary dependence, and with BH, which serves as a familiar reference procedure but does not possess a general finite-sample FDR guarantee under arbitrary dependence. The primary objective is to determine whether the proposed procedure can improve upon the power of BY while maintaining control of the FDR when a Gaussian model is appropriate and independent covariance information is available.

\begin{figure}[htbp!]
    \centering

    \begin{subfigure}[t]{0.33\linewidth}
        \centering
        \includegraphics[width=\linewidth, height=0.17\textheight]{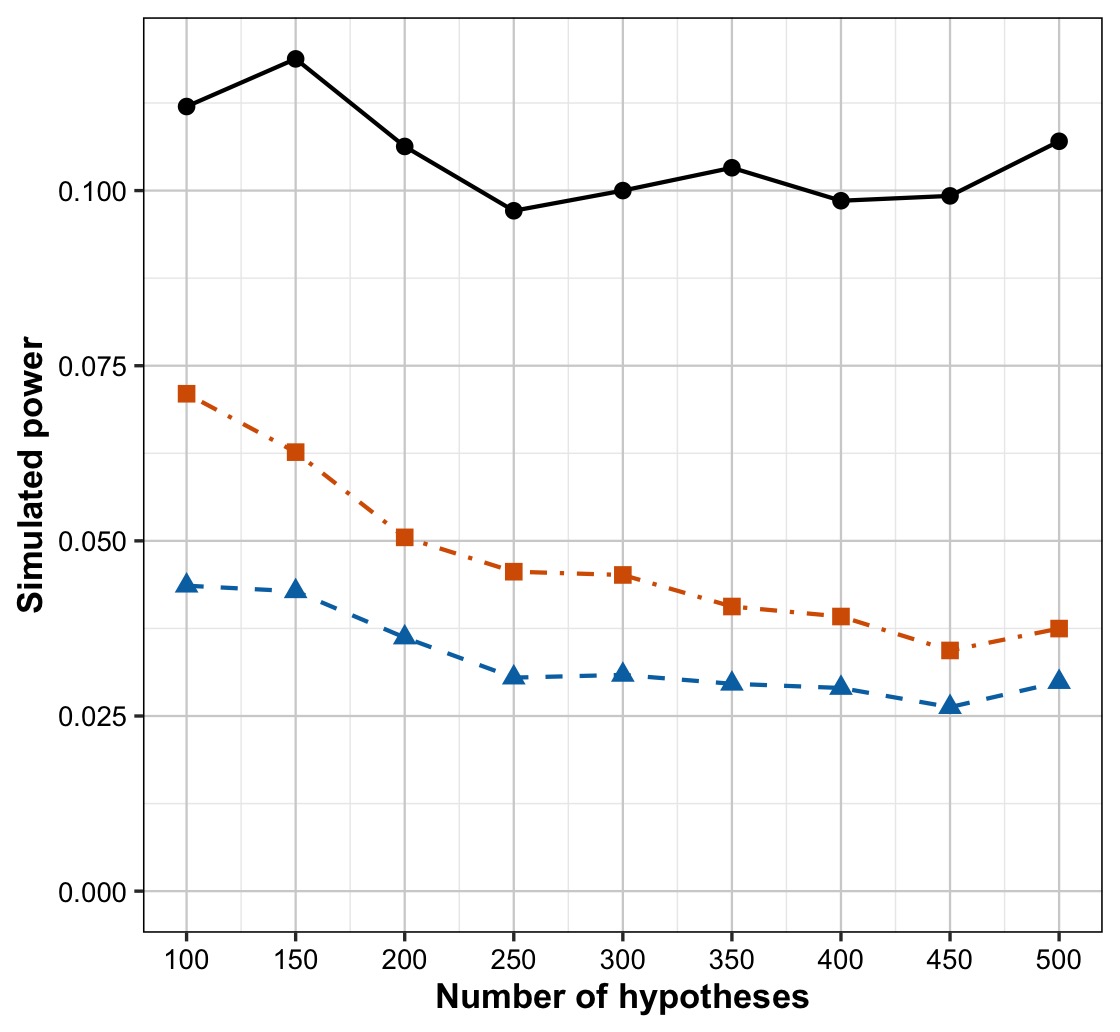}
        \caption{}
    \end{subfigure}\hfill
    \begin{subfigure}[t]{0.33\linewidth}
        \centering
        \includegraphics[width=\linewidth, height=0.17\textheight]{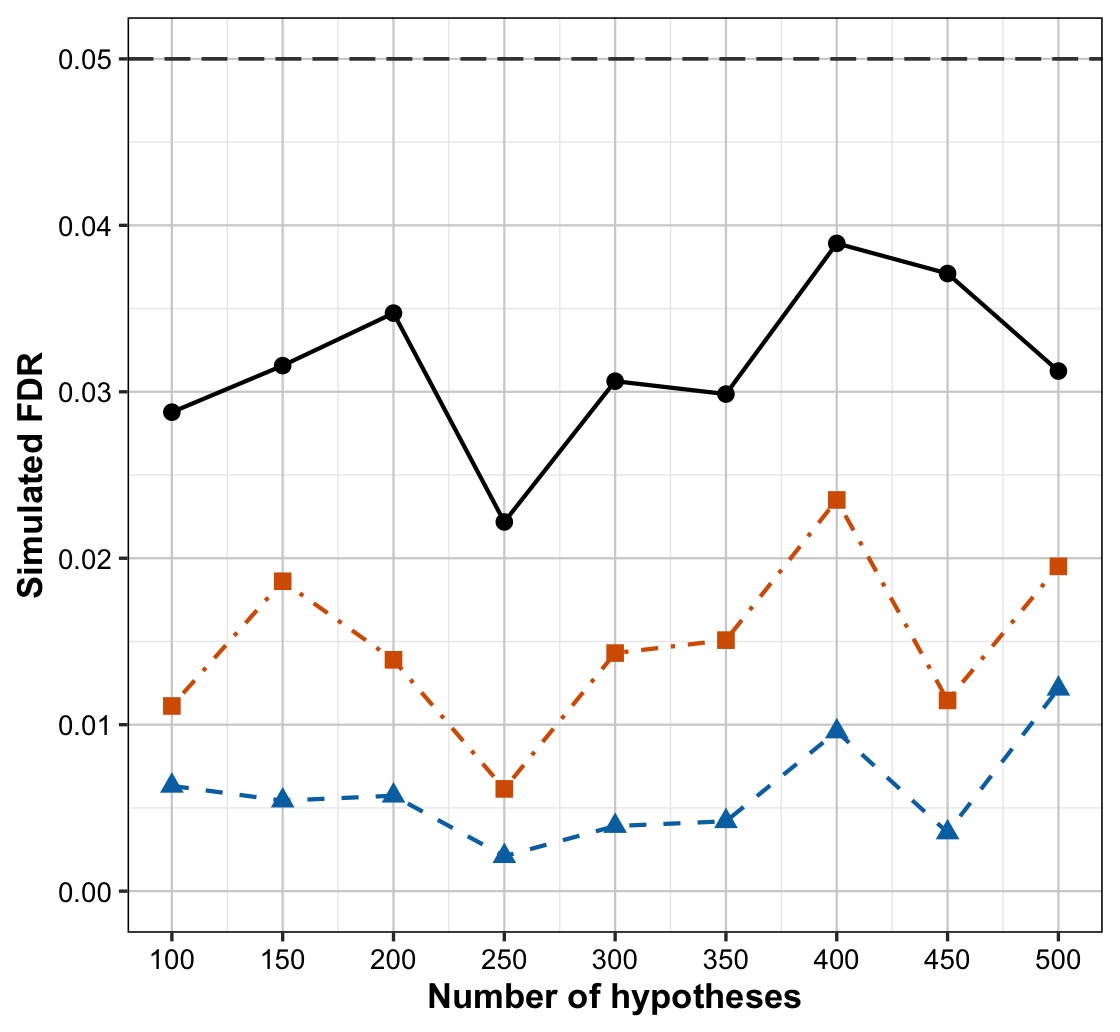}
        \caption{}
    \end{subfigure}\hfill
    \begin{subfigure}[t]{0.33\linewidth}
        \centering
        \includegraphics[width=\linewidth, height=0.17\textheight]{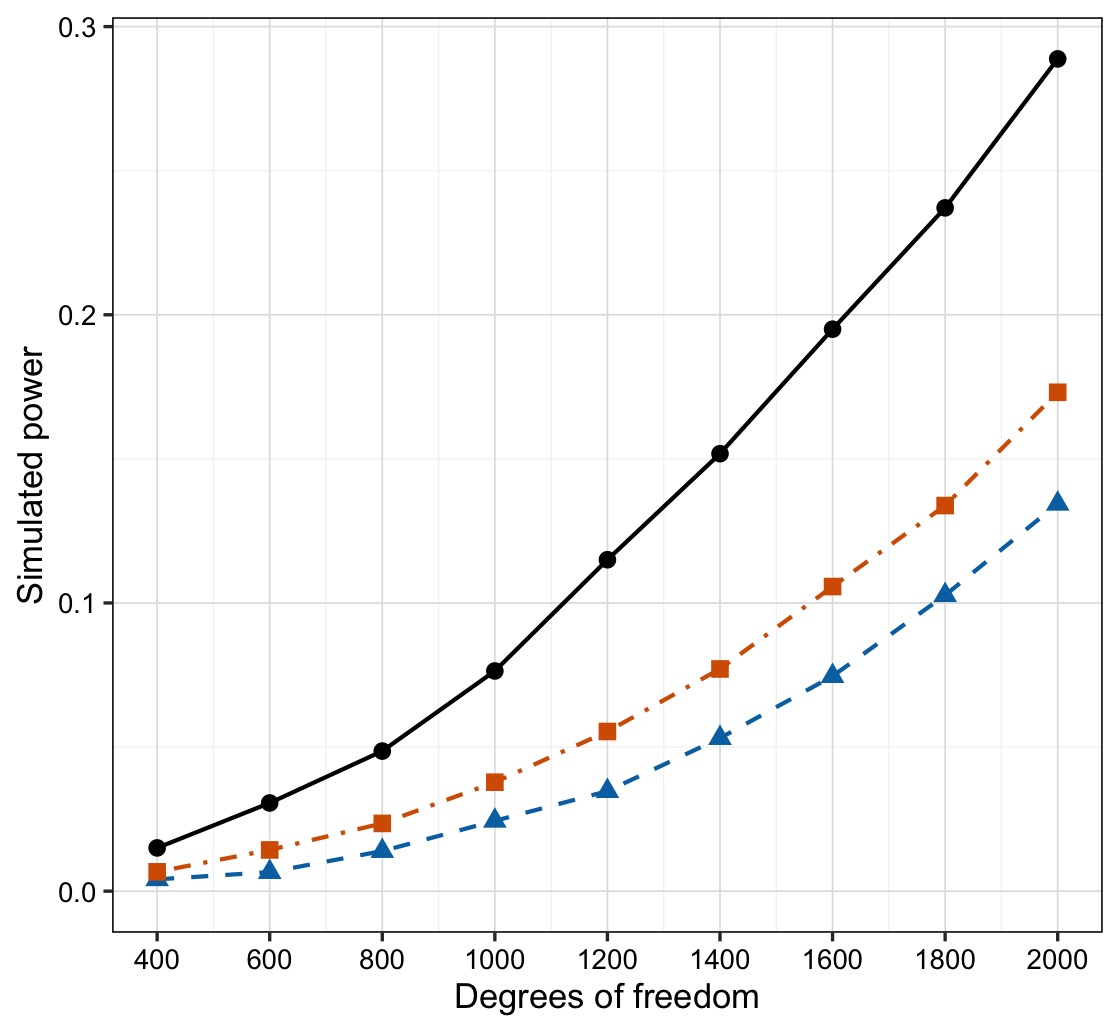}
        \caption{}
    \end{subfigure}\hfill

    \medskip

    \begin{subfigure}[t]{0.33\linewidth}
        \centering
        \includegraphics[width=\linewidth, height=0.17\textheight]{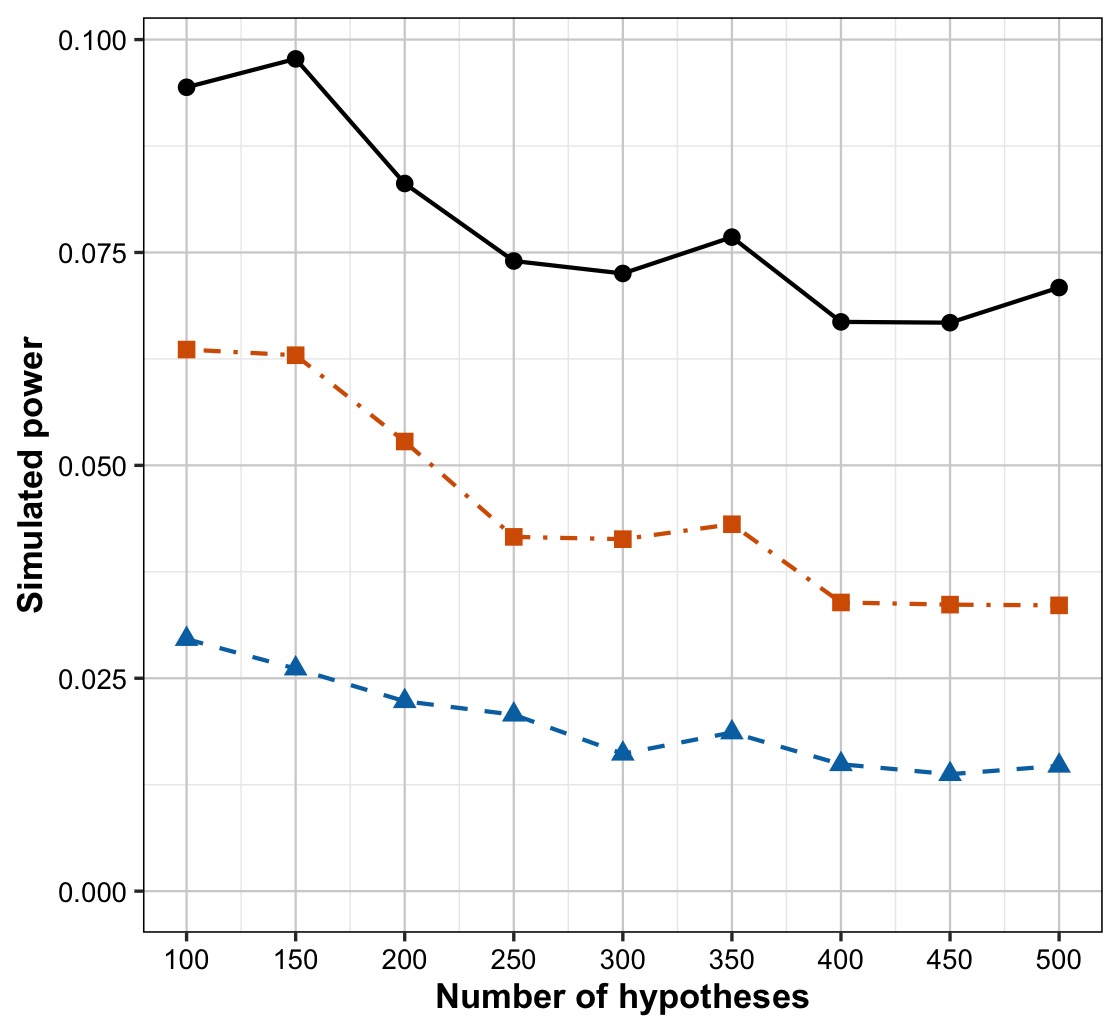}
        \caption{}
    \end{subfigure}\hfill
    \begin{subfigure}[t]{0.33\linewidth}
        \centering
        \includegraphics[width=\linewidth, height=0.17\textheight]{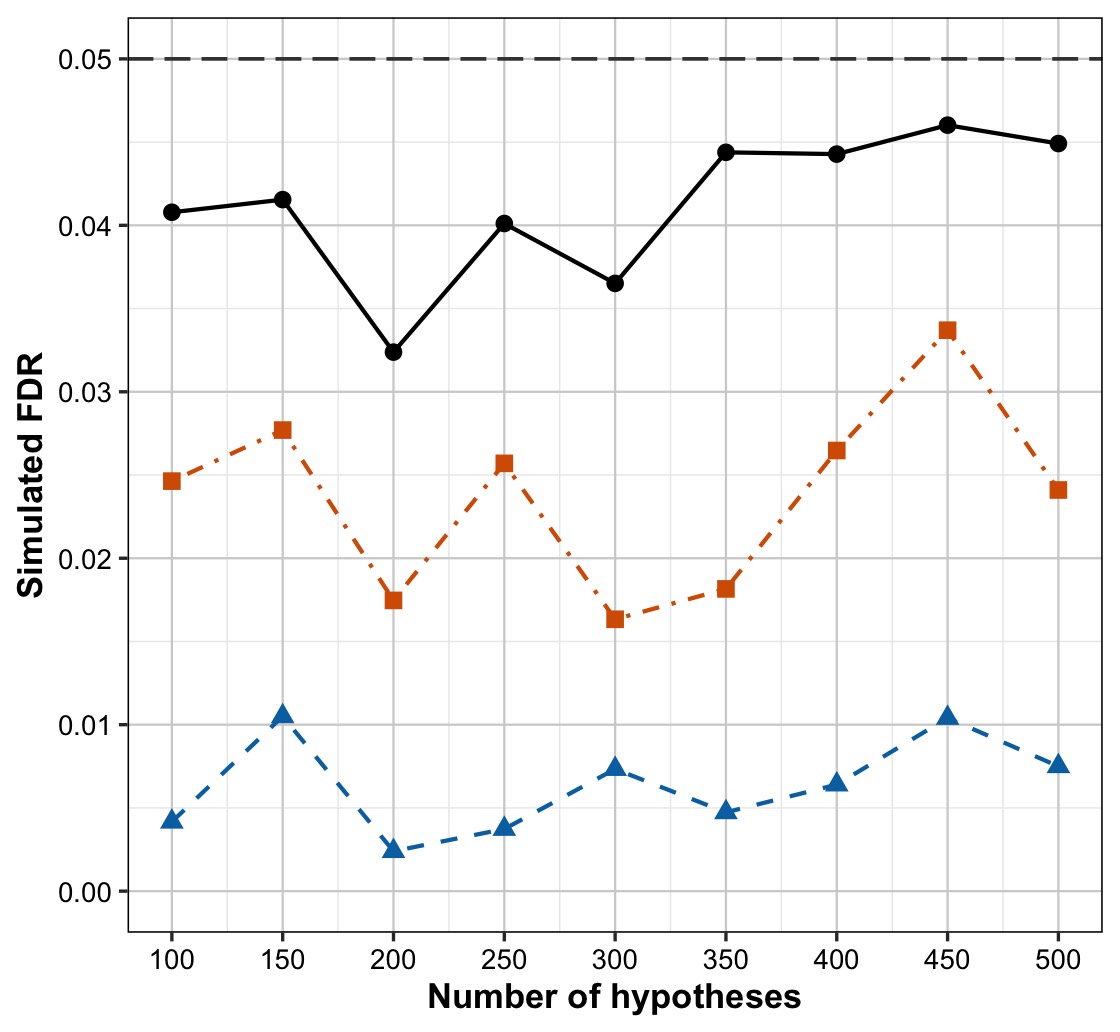}
        \caption{}
    \end{subfigure}\hfill
    \begin{subfigure}[t]{0.33\linewidth}
        \centering
        \includegraphics[width=\linewidth, height=0.17\textheight]{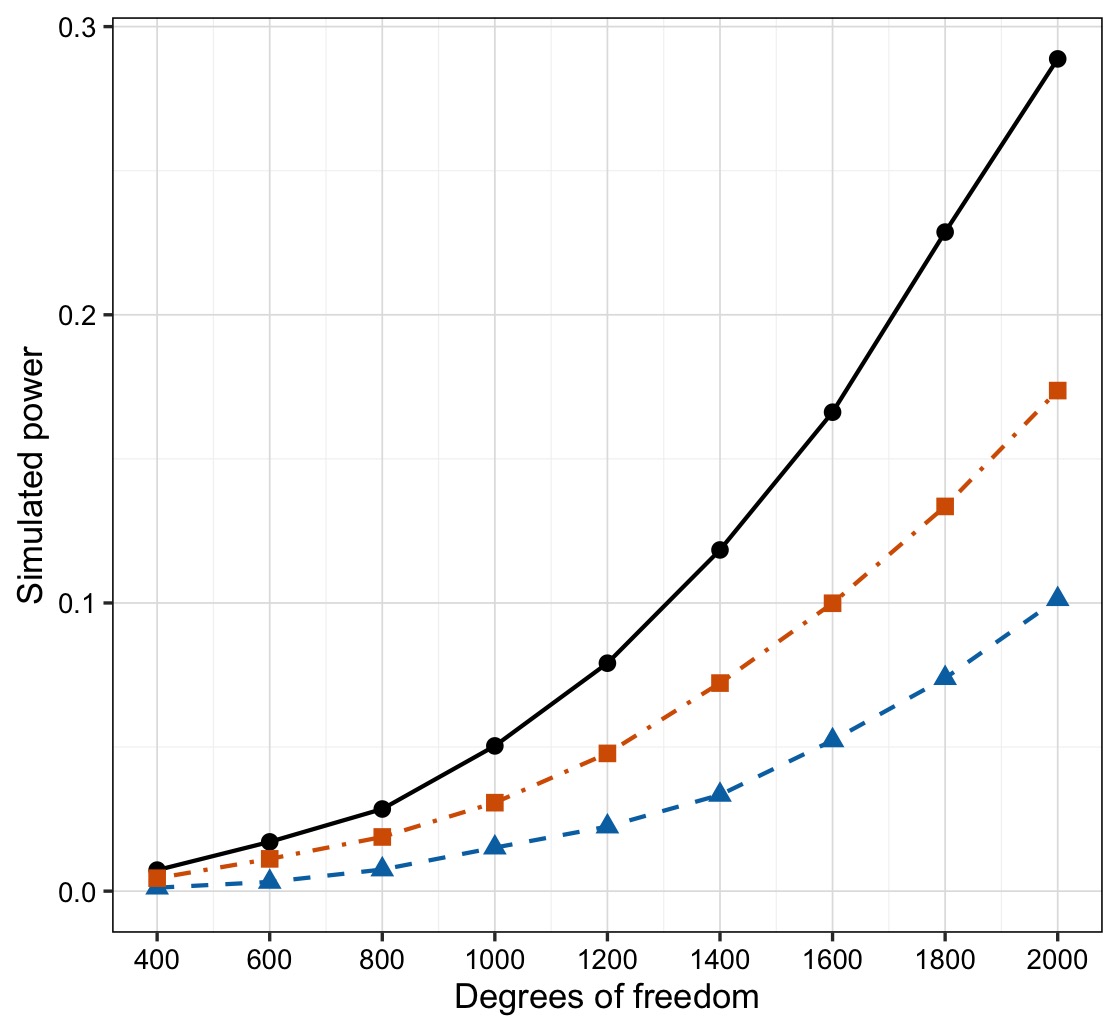}
        \caption{}
    \end{subfigure}\hfill

    \medskip

    \begin{subfigure}[t]{0.33\linewidth}
        \centering
        \includegraphics[width=\linewidth, height=0.17\textheight]{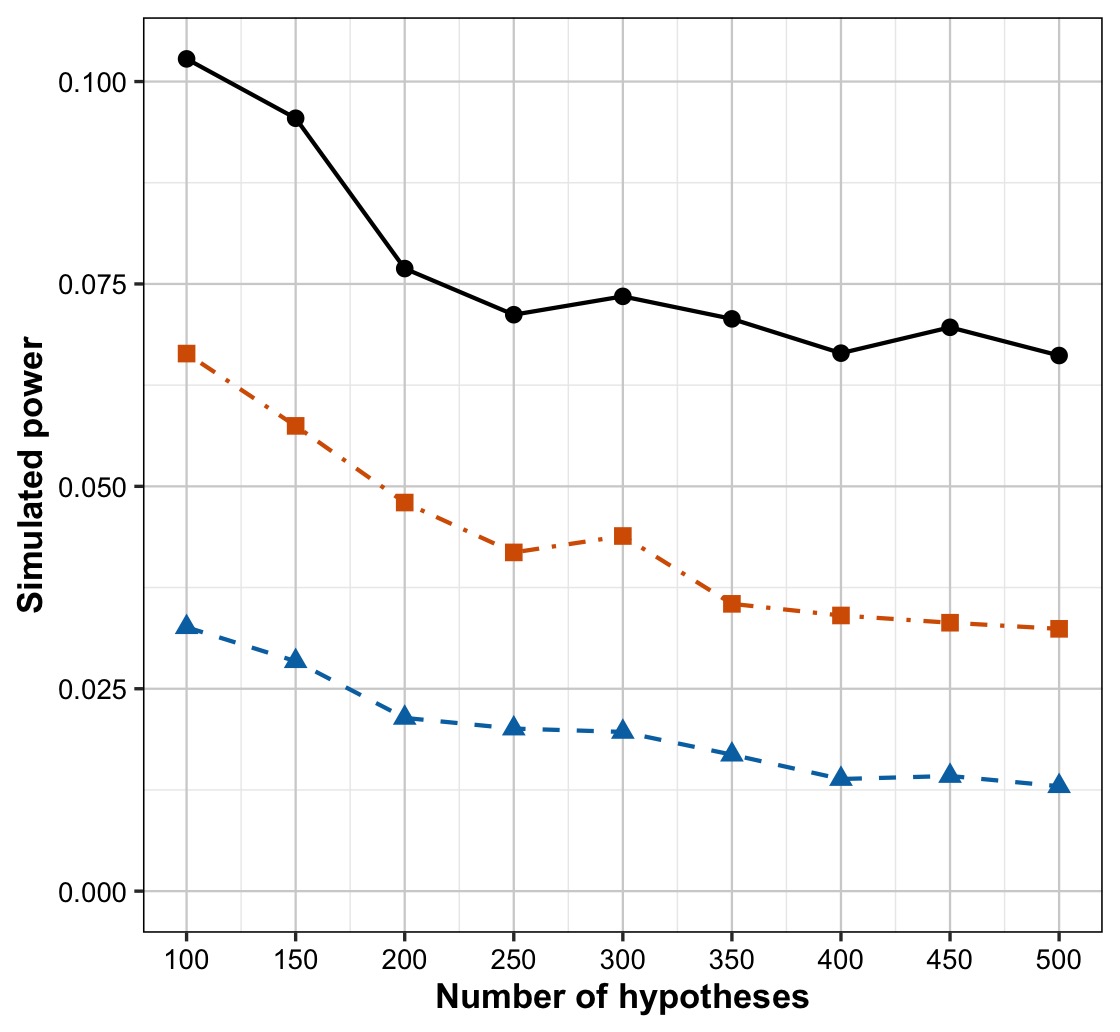}
        \caption{}
    \end{subfigure}\hfill
    \begin{subfigure}[t]{0.33\linewidth}
        \centering
        \includegraphics[width=\linewidth, height=0.17\textheight]{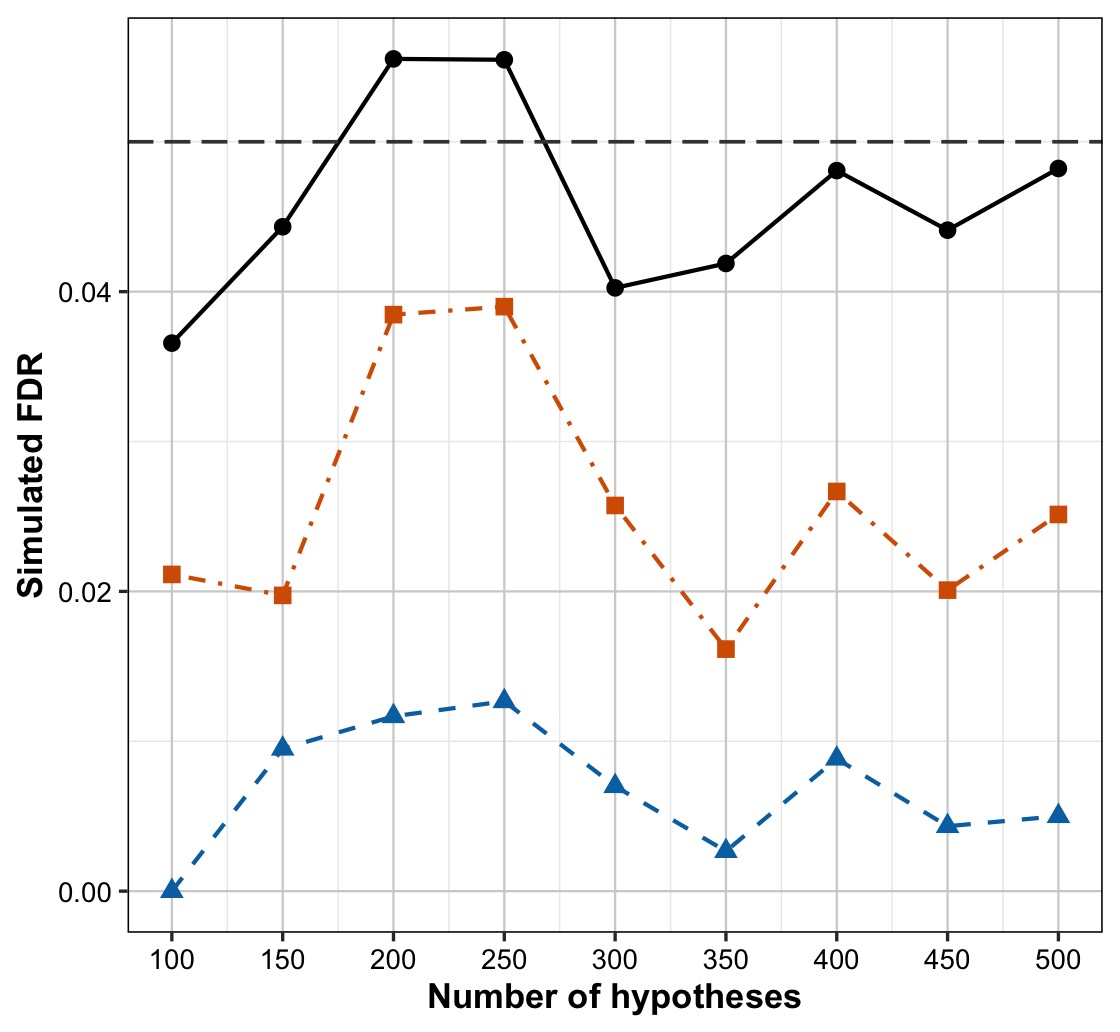}
        \caption{}
    \end{subfigure}\hfill
    \begin{subfigure}[t]{0.33\linewidth}
        \centering
        \includegraphics[width=\linewidth, height=0.17\textheight]{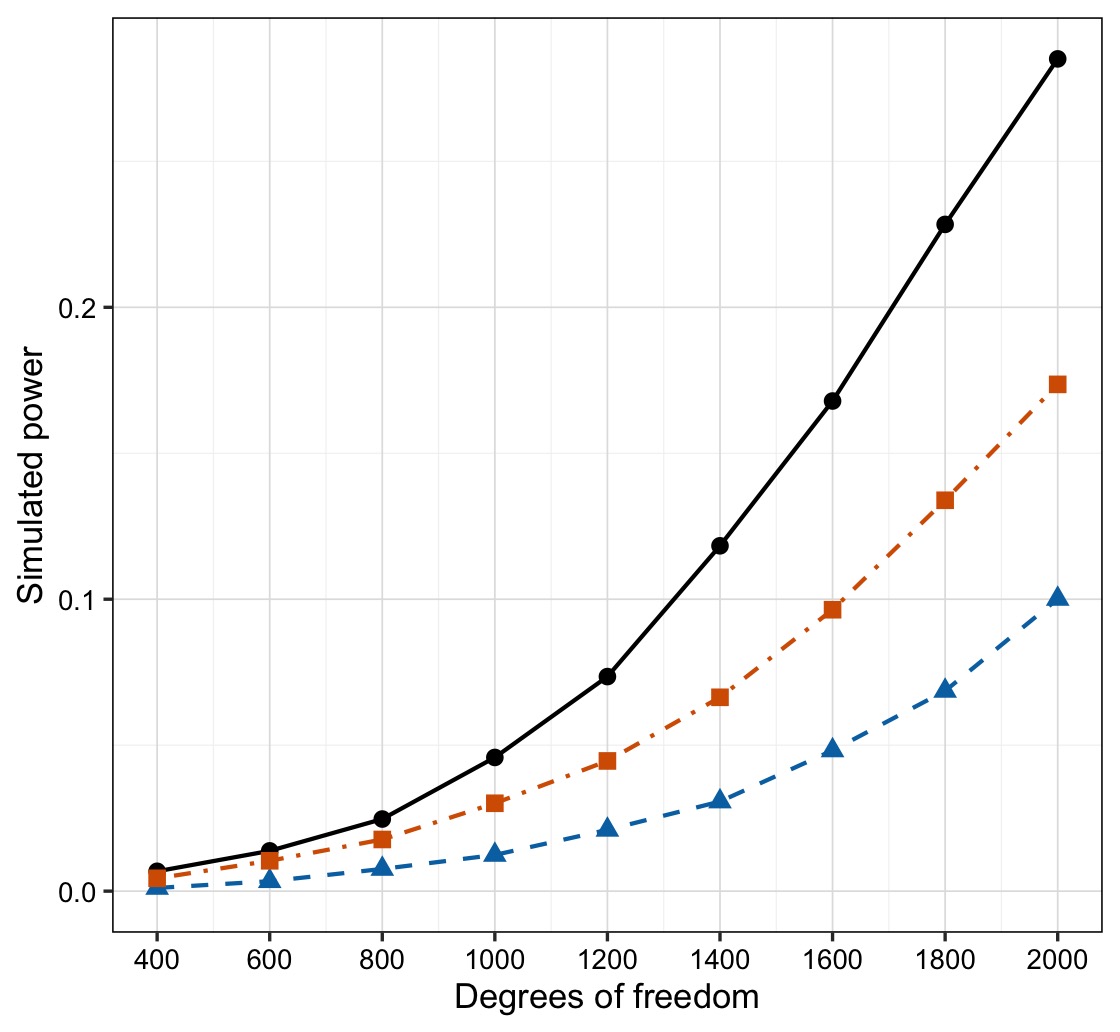}
        \caption{}
    \end{subfigure}\hfill
  \caption{
        Line plots are displayed for a compound symmetric (first row), autoregressive (second row) and inverse autoregressive (third row) correlation structures with $\rho=0.5$ and confidence level $\beta=0.01$. Methods compared are Benjamini-Hochberg (Circle and black), Benjamini-Yekutieli (Triangle point up and blue) and confidence-bound shifted BH (Square and orange). Simulated power is observed across varying number of tests (left panel) and degrees of freedom (right panel) while simulated FDR is plotted across varying number of tests (middle panel)
        }
  \label{fig:means-unknown-1} 
\end{figure}

\begin{figure}[htbp!]
    \centering

    \begin{subfigure}[t]{0.33\linewidth}
        \centering
        \includegraphics[width=\linewidth, height=0.2\textheight]{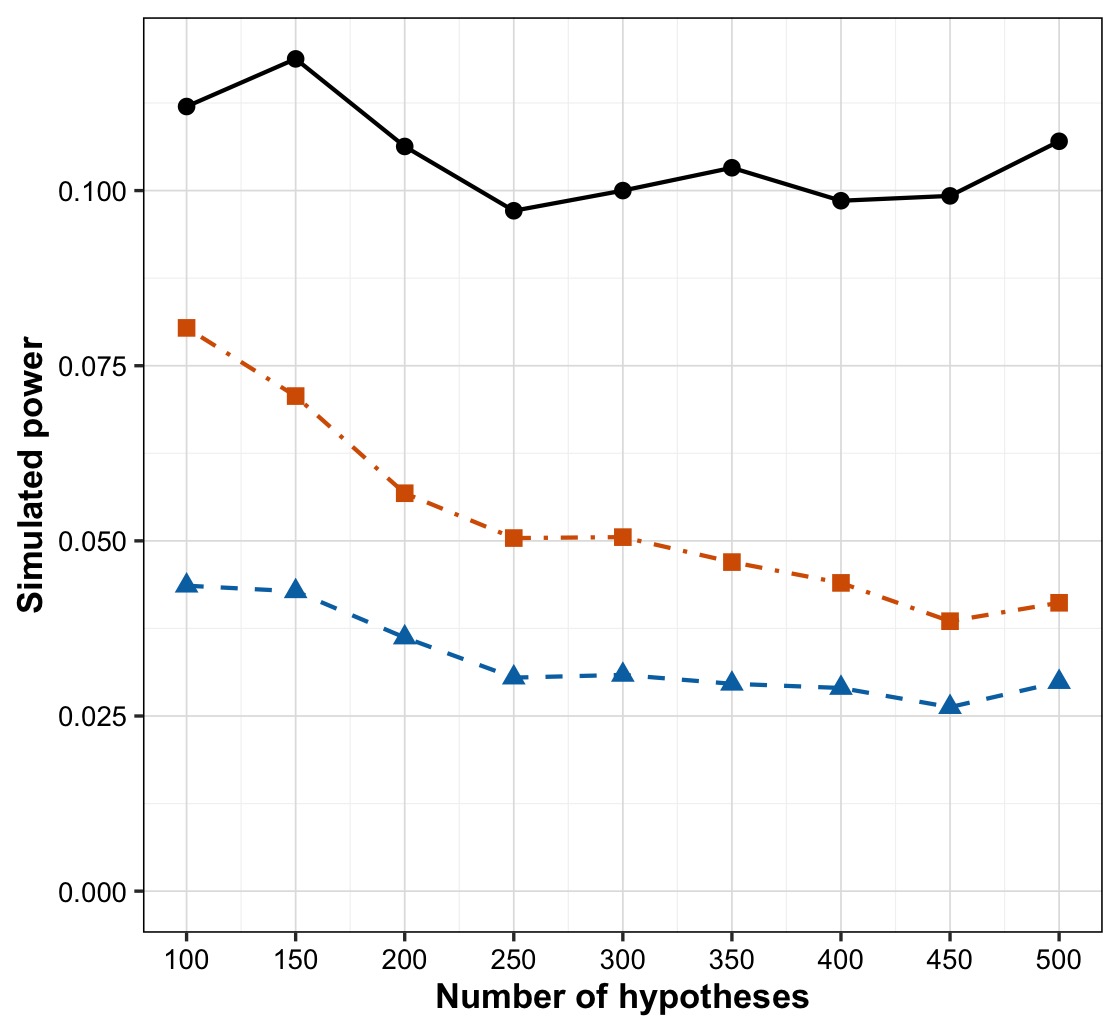}
        \caption{}
    \end{subfigure}\hfill
    \begin{subfigure}[t]{0.33\linewidth}
        \centering
        \includegraphics[width=\linewidth, height=0.2\textheight]{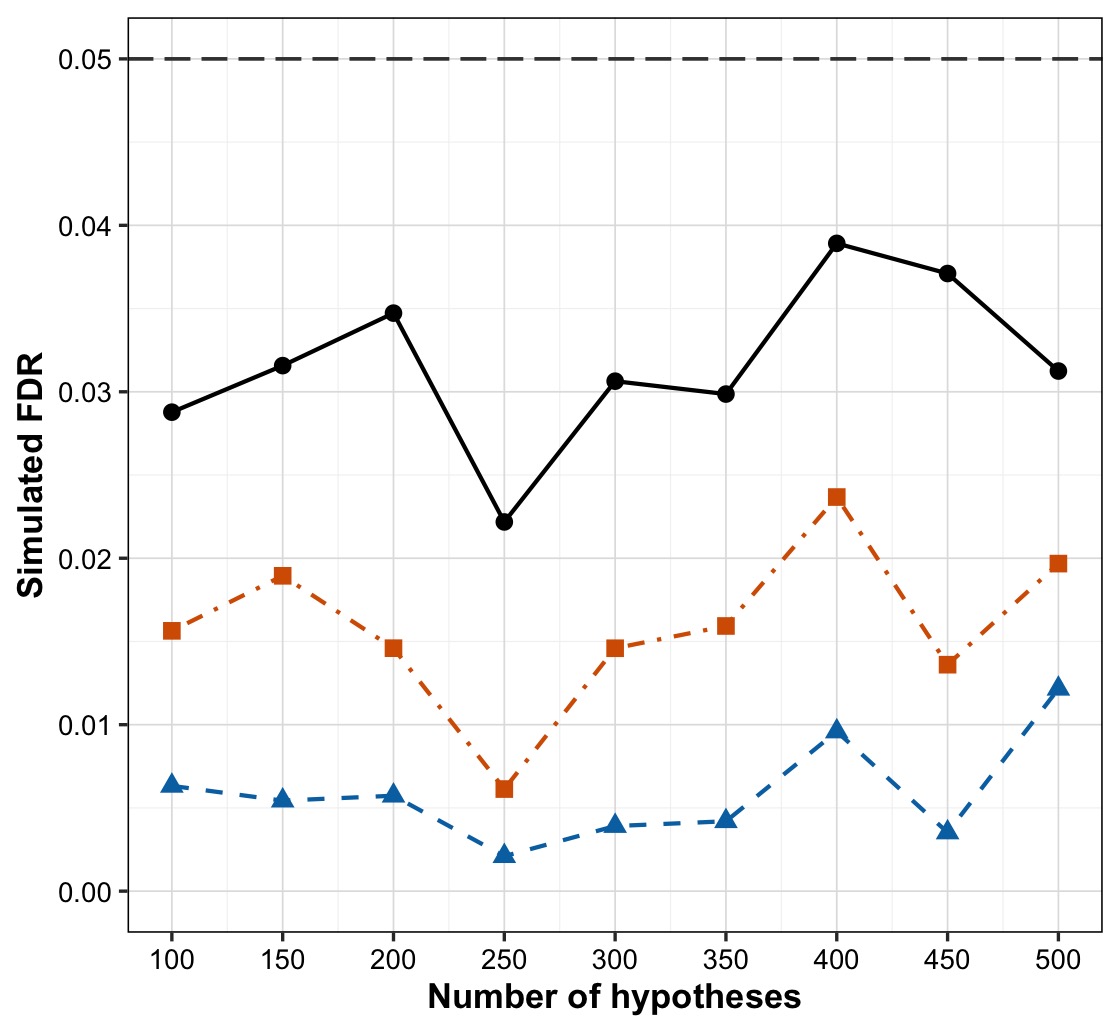}
        \caption{}
    \end{subfigure}\hfill
    \begin{subfigure}[t]{0.33\linewidth}
        \centering
        \includegraphics[width=\linewidth, height=0.2\textheight]{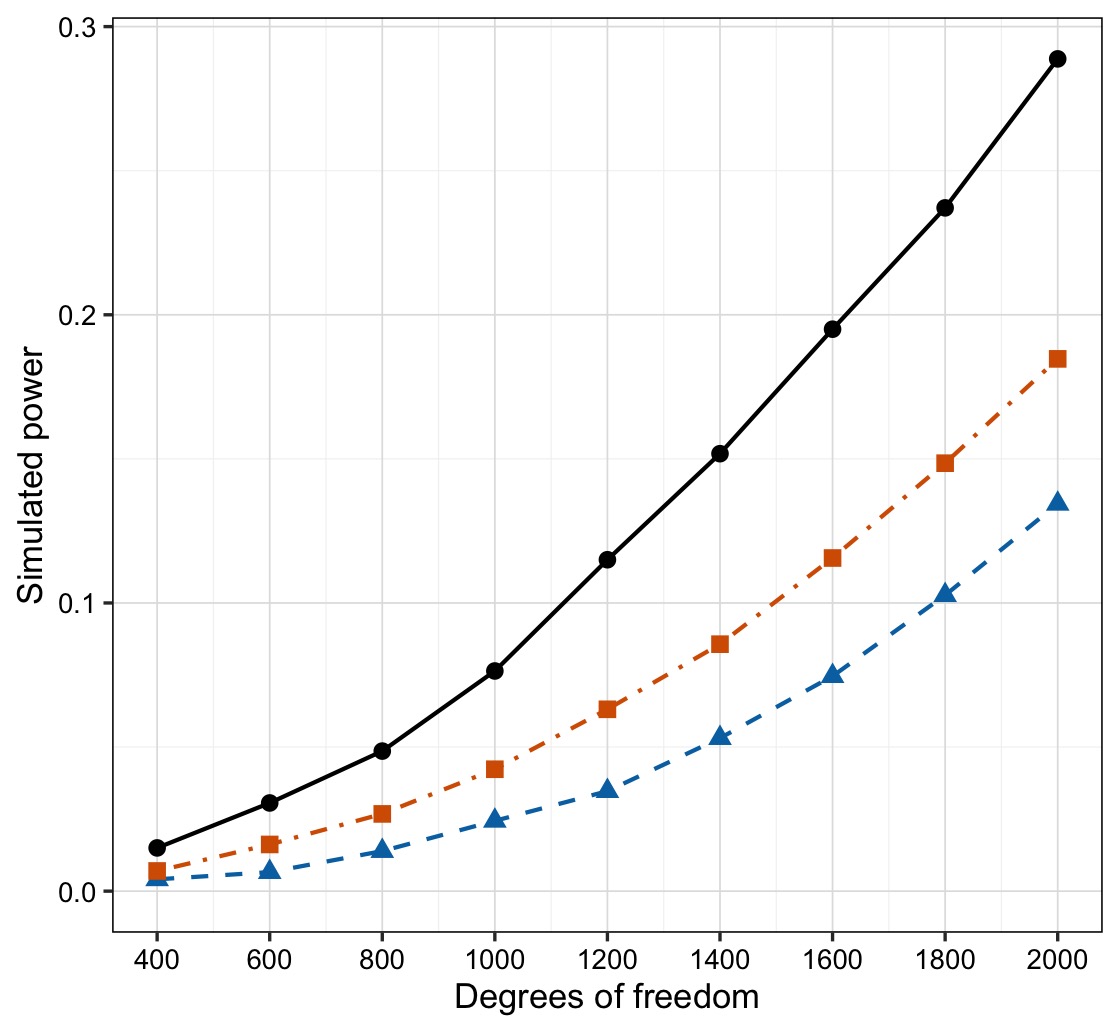}
        \caption{}
    \end{subfigure}\hfill

    \medskip

    \begin{subfigure}[t]{0.33\linewidth}
        \centering
        \includegraphics[width=\linewidth, height=0.2\textheight]{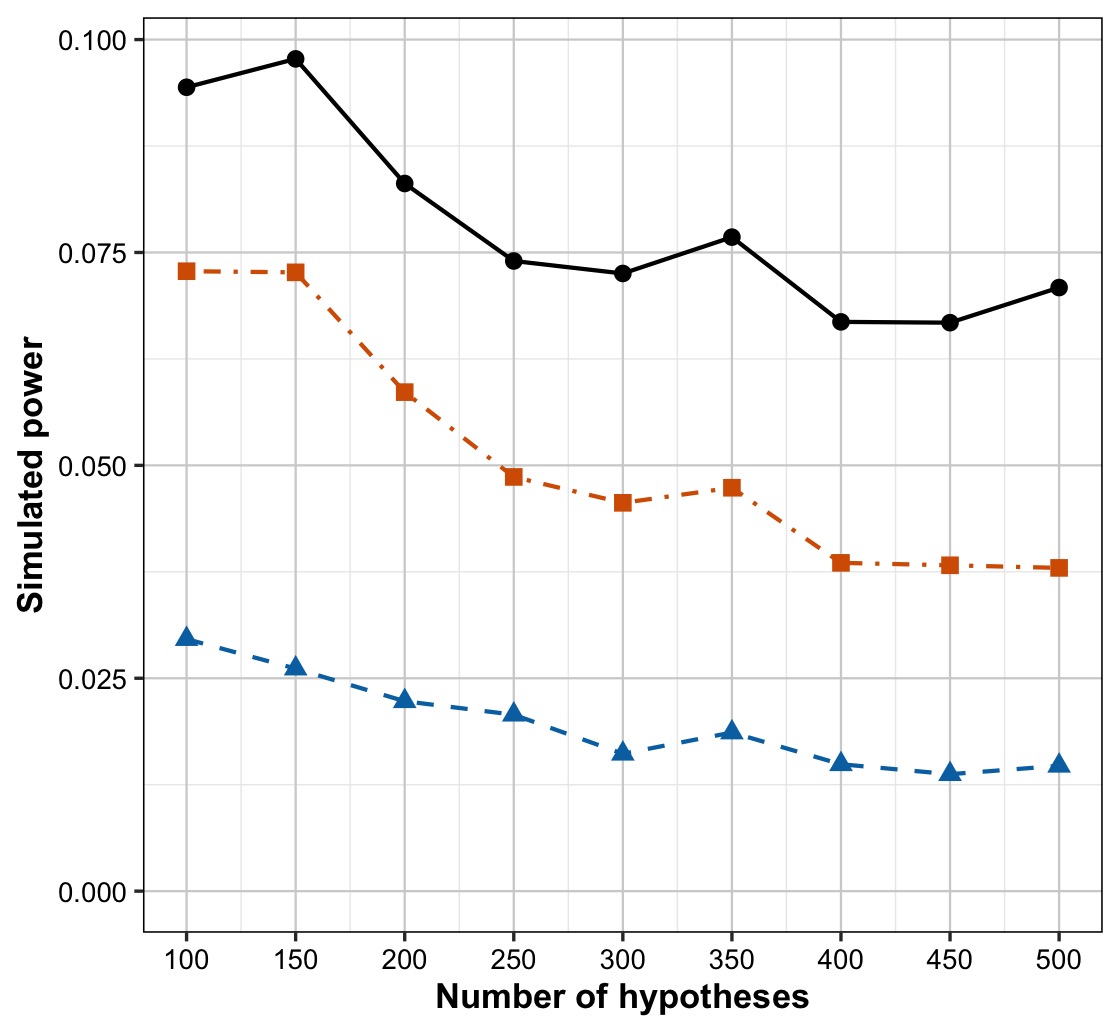}
        \caption{}
    \end{subfigure}\hfill
    \begin{subfigure}[t]{0.33\linewidth}
        \centering
        \includegraphics[width=\linewidth, height=0.2\textheight]{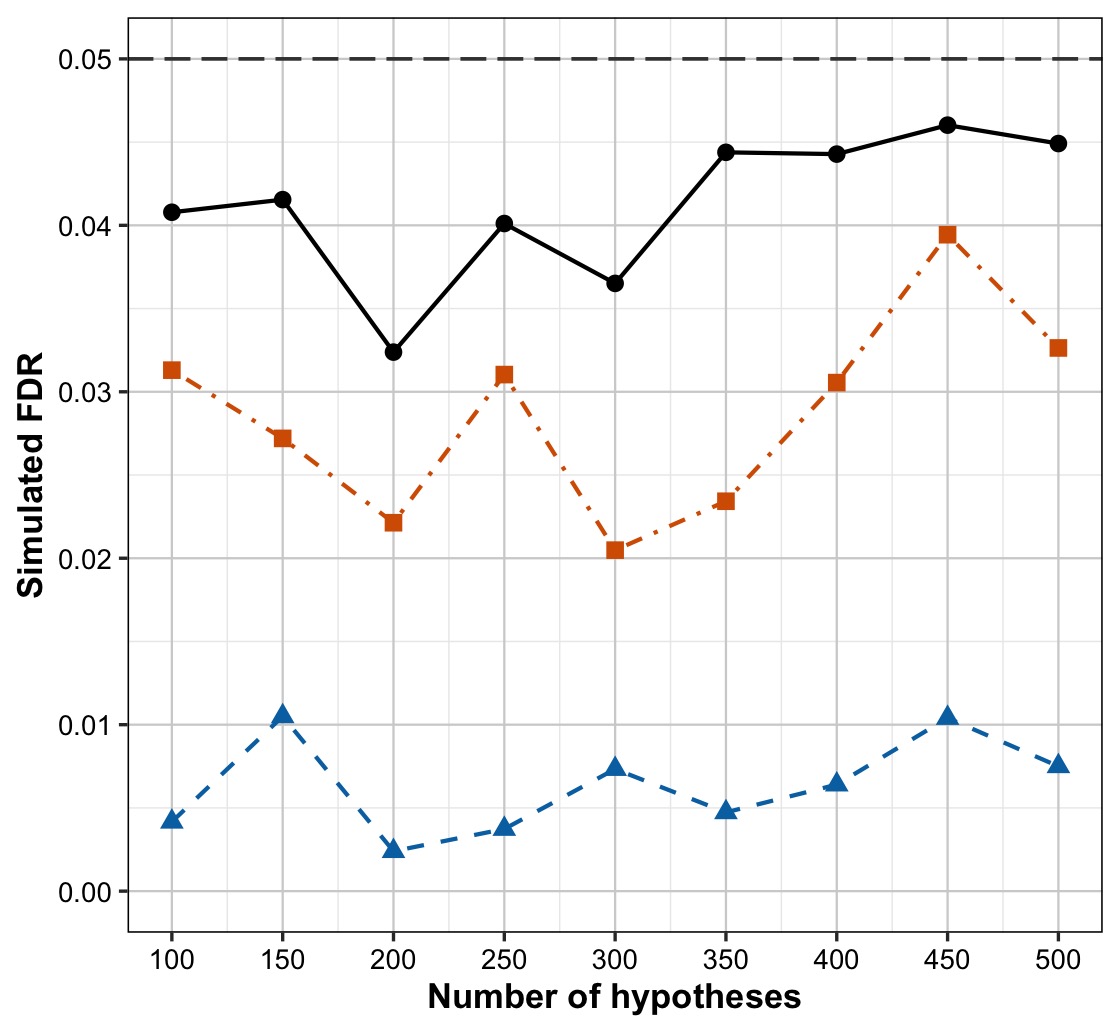}
        \caption{}
    \end{subfigure}\hfill
    \begin{subfigure}[t]{0.33\linewidth}
        \centering
        \includegraphics[width=\linewidth, height=0.2\textheight]{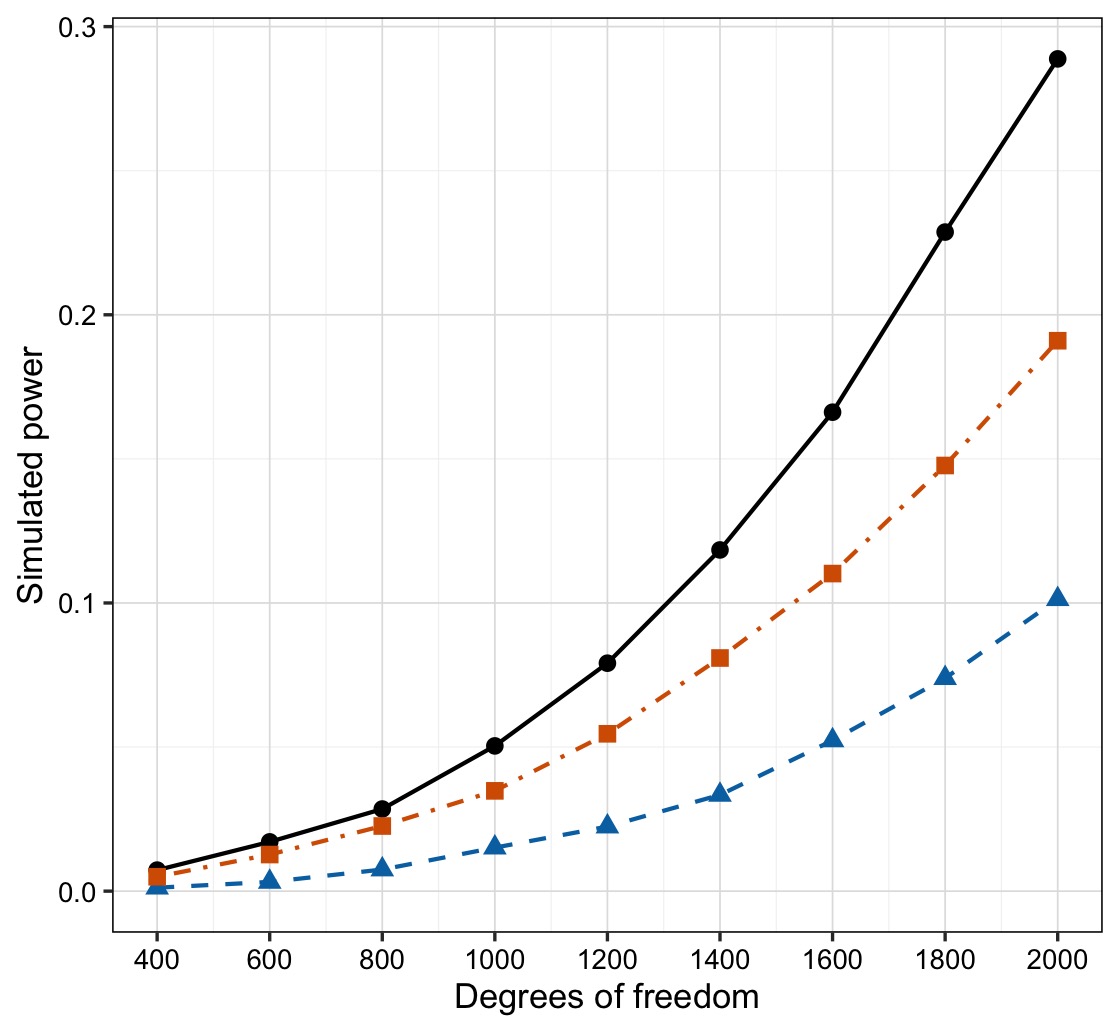}
        \caption{}
    \end{subfigure}\hfill

    \medskip

    \begin{subfigure}[t]{0.33\linewidth}
        \centering
        \includegraphics[width=\linewidth, height=0.2\textheight]{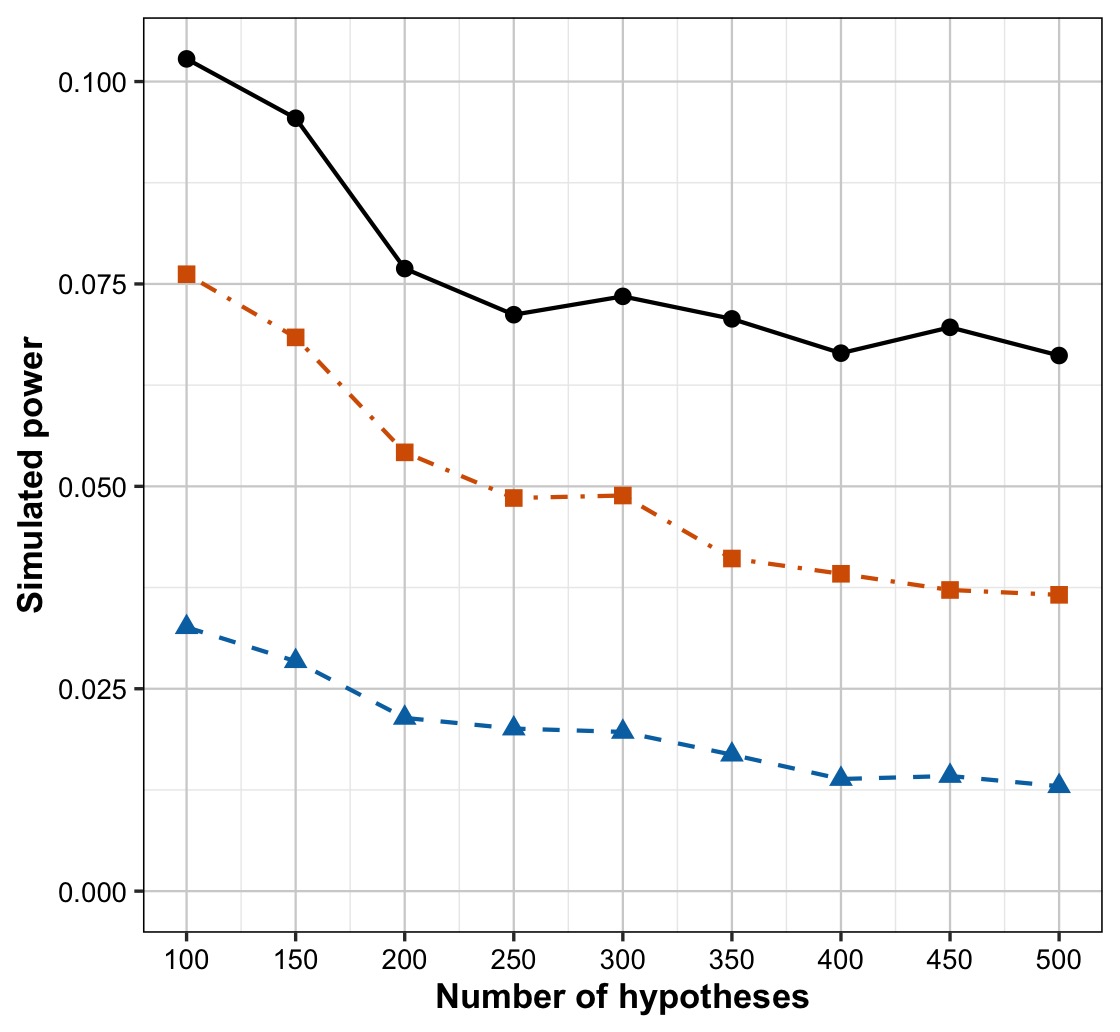}
        \caption{}
    \end{subfigure}\hfill
    \begin{subfigure}[t]{0.33\linewidth}
        \centering
        \includegraphics[width=\linewidth, height=0.2\textheight]{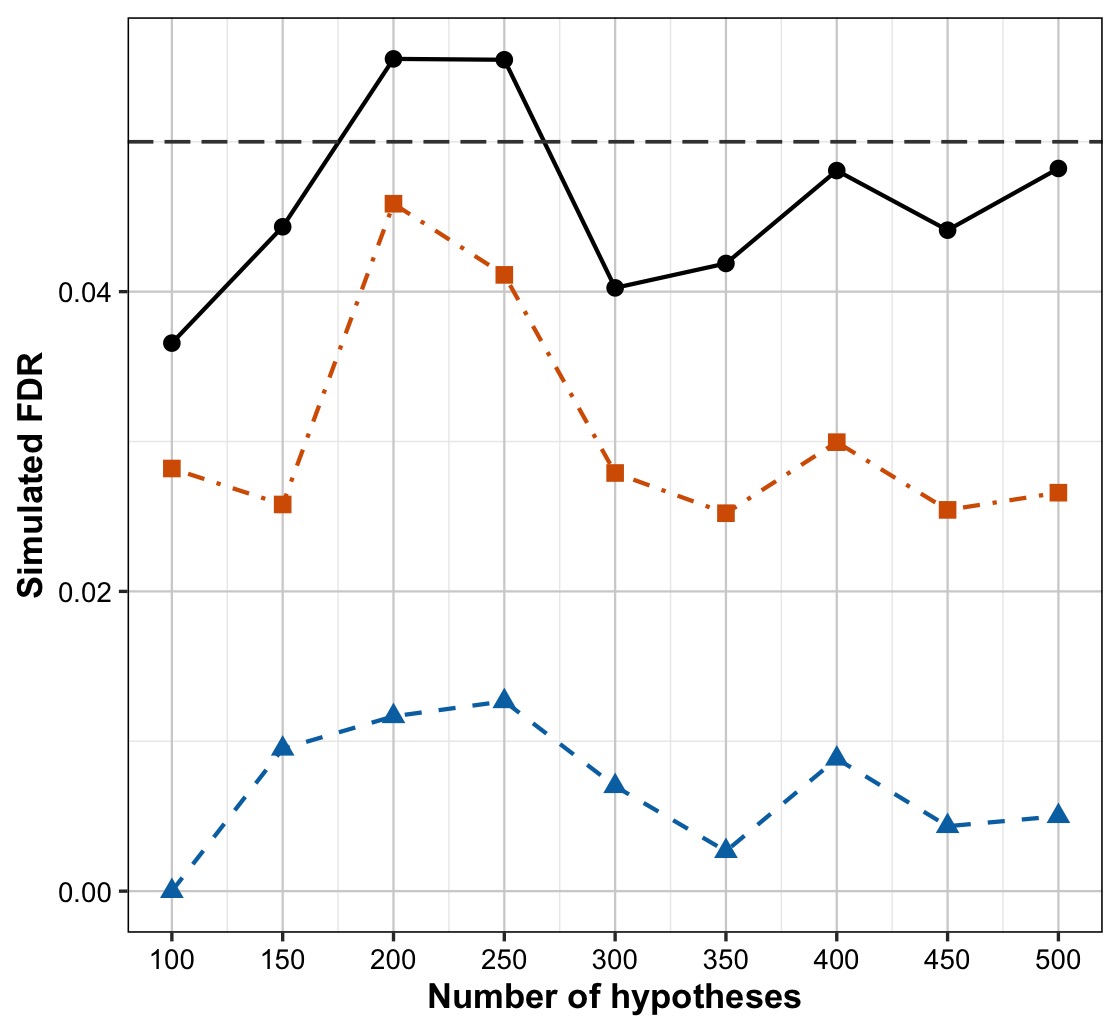}
        \caption{}
    \end{subfigure}\hfill
    \begin{subfigure}[t]{0.33\linewidth}
        \centering
        \includegraphics[width=\linewidth, height=0.2\textheight]{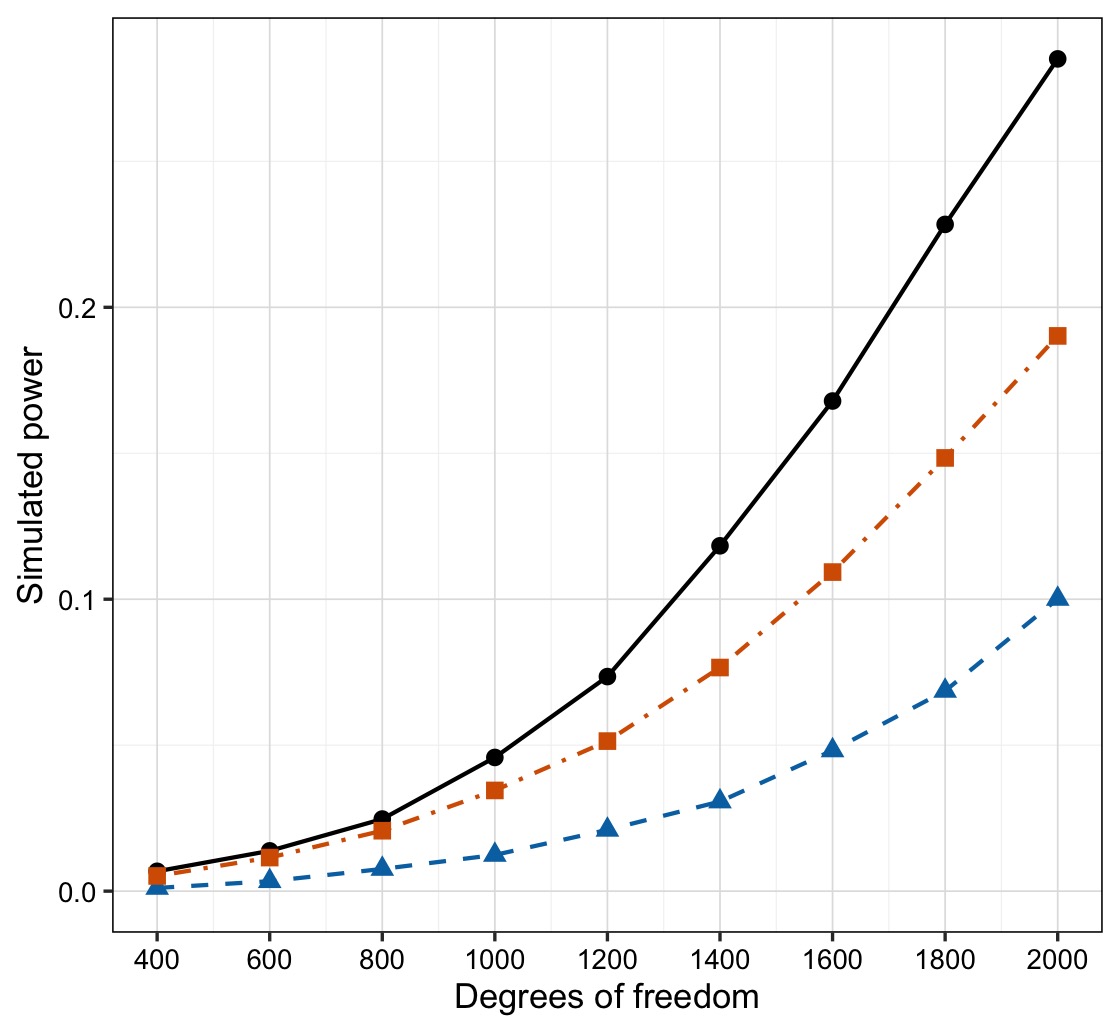}
        \caption{}
    \end{subfigure}\hfill
  \caption{
        Line plots are displayed for a compound symmetric (first row), autoregressive (second row) and inverse autoregressive (third row) correlation structures with $\rho=0.5$ and confidence level $\beta=0.0001$. Methods compared are Benjamini-Hochberg (Circle and black), Benjamini-Yekutieli (Triangle point up and blue) and confidence-bound shifted BH (Square and orange). Simulated power is observed across varying number of tests (left panel) and degrees of freedom (right panel) while simulated FDR is plotted across varying number of tests (middle panel)
        }
  \label{fig:means-unknown-2} 
\end{figure}
For constructing the simultaneous lower confidence bounds, the componentwise miscoverage probabilities $\beta_i, \; i=1, \ldots, d$, are chosen to satisfy $\sum_{i=1}^{d}\beta_i\leq \beta$, with $\beta$  being the overall miscoverage probability. We set $\beta \in \{0.01,0.0001\}$, corresponding to simultaneous confidence levels of $0.99$ and $0.9999$, respectively, and use the equal Bonferroni allocation $\beta_i= \beta/d,\; i=1,\ldots,d$. 

Across the correlation structures considered, the confidence-bound shifted BH procedure identifies non-null signals more effectively than BY. Figures~\ref{fig:means-unknown-1} and \ref{fig:means-unknown-2} show that this power advantage persists as both the number of tests and the degrees of freedom available for covariance estimation are varied. These results support the proposed procedure as a competitive alternative to BY when the Gaussian assumption is reasonable and an independent covariance estimate is available.

Among the two choices of the overall miscoverage probability, $\beta=0.0001$ generally yields higher empirical power than $\beta=0.01$, while also producing tighter empirical control of the FDR. Thus, within the simulation configurations considered, the more stringent simultaneous confidence level improves both signal-detection performance and the stability of FDR control.

\section{Discussion}\label{sec:discussion}
This paper extends the PTDN and generalized shifted-BH framework of \citet{GhoshSarkar2025}, broadening both its interpretation and scope. Revisiting PTDN clarifies its role as a notion of positive dependence tailored to the lower-tail events underlying BH-type procedures and its relationship with the classical PRDS condition. This perspective suggests that PTDN is useful not only for constructing finite-sample FDR-controlling procedures, but also for identifying the aspects of dependence most relevant to the BH FDR.

This viewpoint yields a dependence-adaptive analysis of the original BH procedure. The conditional variance parameters, which underlie shifted-BH methods, also provide explicit finite-sample bounds on the BH FDR. Unlike the generic bounds of \citet{BenjaminiYekutieli2001} and \citet{Su2018}, these bounds incorporate the specified covariance structure and recover the exact BH FDR under independence. Their adaptation is particularly transparent under equal correlations, where they vary explicitly with the correlation parameter. They therefore complement the common-factor asymptotic theory of \citet{Lei2026}, which addresses worst-case behavior over broader dependence classes. The same conditional-variance structure also explains how coordinate-specific shifting can create a rejection advantage over BH when signals occur in highly dependent coordinates.

The unknown-covariance development further extends shifted-BH methodology. Replacing the oracle conditional variance parameters by simultaneous lower confidence bounds obtained from an independent Wishart sample yields a confidence-bound shifted BH procedure with finite-sample FDR control. More generally, this construction suggests that uncertainty in the dependence parameters required for PTDN can be incorporated into the testing rule by allocating the target error level between FDR calibration and confidence failure.

Several questions remain. For equi-correlated and common-factor Gaussian models, it would be interesting to determine whether the covariance-adaptive BH bound recovers the optimal asymptotic order established by \citet{Lei2026}, while providing sharper finite-sample information over practically relevant dependence regimes. For unknown covariance structures, sharper simultaneous confidence bounds, adaptive allocation of the confidence error, and extensions to regularized high-dimensional covariance estimation may improve the methodology. Other promising directions include data-fission or external-randomization constructions and identifying broader distributional families possessing the tail-concavity properties underlying PTDN.

Taken together, these results broaden PTDN from a principle for constructing shifted-BH procedures into a more general framework for understanding the finite-sample behavior of BH, identifying when dependence-aware shifting can improve discovery, and extending FDR control to settings in which the dependence structure itself must be estimated. Although the present developments are established under Gaussianity, the Gaussian model is used not as a universal description of modern dependent data, but as a tractable setting that provides exact finite-sample distributions, useful approximations for standardized mean contrasts, score statistics, and other aggregated estimators, and a transparent basis for analyzing dependence-adjusted procedures. The resulting principles, particularly those formulated through conditional lower-tail behavior, may therefore provide a foundation for extending PTDN to broader classes of dependent testing problems beyond the Gaussian setting.

\section*{Acknowledgements}
The authors gratefully acknowledge support from NSF grant DMS 2210687. The authors also credit OpenAI GPT-5.6 Sol for assisting during the preparation of this work in terms of language editing and improving the clarity and readability of certain parts of the manuscript. After using this tool, the authors reviewed the content and take full responsibility for the content of the publication.

\bibliographystyle{apalike}
\bibliography{biblio.bib}

\end{document}